\documentclass{IEEEtaes}

\usepackage{color,array,amsthm}
\usepackage{amsmath,amssymb,amsfonts}
\usepackage{graphicx}
\usepackage{cite}
\usepackage{multirow}
\usepackage{booktabs}
\usepackage{comment}
\usepackage{soul}
\usepackage{stfloats}       

\sethlcolor{yellow}

\allowdisplaybreaks

\newtheorem{theorem}{Theorem}           
\newtheorem{lemma}{Lemma}               
\newtheorem{assum}{Assumption}          
\newtheorem{rem}{Remark}                

\begin{document}

\title{Integrated Flight and Propulsion Control for Fixed-Wing UAVs via Thrust and Disturbance Compensation}

\author{CHONG-YI SUN}
\member{Student Member, IEEE}
\affil{Dalian University of Technology, Dalian, China}

\author{HELING YUAN}
\affil{Dalian University of Technology, Dalian, China}

\author{XU FANG}
\member{Member, IEEE}
\affil{Dalian University of Technology, Dalian, China}

\author{YAN HE}
\affil{Dalian University of Technology, Dalian, China}

\author{XI-MING SUN}
\member{Senior Member, IEEE}
\affil{Dalian University of Technology, Dalian, China}

\authoraddress{Chong-Yi Sun is with the School of Control Science and Engineering, Dalian University of Technology, Dalian 116024, China (e-mail: \href{mailto:dut\_suncy@mail.dlut.edu.cn}{dut\_suncy@mail.dlut.edu.cn}).}

\maketitle

\begin{abstract}
    This paper investigates the position-tracking control problem for fixed-wing unmanned aerial vehicles (UAVs) equipped with a turbojet engine via an integrated flight and propulsion control scheme. To this end, a hierarchical control framework with thrust and disturbance compensation is proposed. In particular, we first propose a perturbed fixed-wing UAV model with turbojet engine dynamics, accounting for both unmodeled dynamics and external disturbances. Second, a versatile extended observer is designed to handle both unmeasurable thrust dynamics and external disturbances. Third, a hierarchical control framework is implemented using three observer-based controllers to guarantee position-tracking performance. With the proposed control strategy, we prove that the closed-loop system asymptotically converges to the desired trajectory. Finally, a comparative simulation is performed to illustrate the proposed control strategy.
\end{abstract}

\begin{IEEEkeywords}
    Fixed-wing UAVs,
    integrated flight and propulsion control,
    hierarchical control framework,
    disturbance estimation and compensation
\end{IEEEkeywords}

\section{INTRODUCTION}
\label{sec-1}

The excellent characteristics of fixed-wing unmanned aerial vehicles (UAVs) in high efficiency, long flight duration, and high cruising speed have attracted considerable interest in recent years \cite{Beard2012}. Currently, the applications of fixed-wing UAVs can be found in many fields, including cooperative missions \cite{Wu2025,Zhao2024}, surveillance \cite{Hu2022,Yuan2023}, target enclosing \cite{Peng2024,Peng2025}, and air traffic management \cite{Cerezo2021,Cerezo2022}. In these applications, motion control issues in fixed-wing UAVs are indispensable, including attitude and position-tracking control.

The main challenge in motion control is addressing the strong nonlinearities and aerodynamic disturbances of fixed-wing UAVs. To address these nonlinearities and uncertainties, a typical control strategy can be found in \cite{Castañeda2017}, where a backstepping control framework with an extended state observer is used to construct an attitude and airspeed controller. The main idea in \cite{Castañeda2017} is combined with other approaches like the gain scheduling \cite{Poksawat2018,Shao2025}, the sliding mode control \cite{Babaei2019,Nguyen2023}, the observer-based control \cite{Zhao2024-2,Xu2024,Fu2025}, and the adaptive control \cite{Hu2024}. These control strategies, however, often neglect the complex dynamics of propulsion systems, such as turbojet engines. The hysteresis and instability phenomena in propulsion systems, like rotating stall and surge \cite{Sun2025}, which in turn deteriorate the tracking control performance. Therefore, it remains unclear how these methods perform in the presence of such propulsion dynamics.

To solve the above issues, an integrated flight and propulsion control framework for fixed-wing UAVs is proposed \cite{Garg1993}, in which the propulsion dynamics is addressed to design an integrated controller for fixed-wing UAVs equipped with turbojet engines. Specifically, compared with \cite{Castañeda2017}, a dynamics model of the engine thrust is established with fuel flow as the input, and the thrust subcontroller is incorporated into the integrated controller. For example, the propulsion system is modeled as a second-order system in \cite{Wu2016}, and a thrust controller is integrated in the flight controller. To address the propulsion dynamics, many techniques are introduced, including backstepping \cite{Zhang2019,Sun2020-2}, sliding mode control \cite{Wang2017,Chen2023,Sun2020}, adaptive control \cite{Dou2017,Guo2020,Hu2018}, and observer-based control \cite{Xu2017,Yang2013,An2016}. Despite the merits of these methods, the simplification of second-order systems is applied in propulsion systems, which overlooks the inherent pneumatic and thermodynamic properties of real turbojet engines. Furthermore, these existing methods can only be used to address the 2D longitudinal motion dynamics, while how to resolve the 3D motion dynamics control problem is unknown yet.

Motivated by the above discussions, this paper focuses on 3D position-tracking control for fixed-wing UAVs equipped with turbojet engines. A hierarchical integrated flight and propulsion control framework with thrust and disturbance compensation is proposed. To this end, we first develop a perturbed fixed-wing UAV model that accounts for turbojet engine dynamics, unmodeled dynamics, and external disturbances. Specifically, the turbojet engine is modeled considering pneumatic and thermodynamic properties. Second, with the developed model, we establish a versatile extended observer to estimate the external disturbance and modeling errors, including the unmeasurable thrust. Third, a hierarchical control framework is implemented to establish an integrated flight and propulsion controller for guaranteeing the position-tracking performance. In particular, three observer-based controllers, including the motion state planner, the flight direction controller, and the airspeed controller, are designed with disturbance compensation. With the proposed control framework, the fixed-wing UAV asymptotically converges to the desired position trajectory. Finally, a numeric simulation is implemented to illustrate the proposed control framework. The main contributions of this paper are outlined as follows.
\begin{enumerate}
    \item A perturbed integrated flight and propulsion model is proposed for fixed-wing UAVs. Compared with most existing ones \cite{Castañeda2017,Poksawat2018,Shao2025,Babaei2019,Nguyen2023,Zhao2024-2,Xu2024,Fu2025,Hu2024} in which the propulsion dynamics is neglected, both turbojet engine dynamics and flight kinetics are included, thereby resulting in more difficulties in system analysis and controller design.
    \item A versatile extended observer is designed for both the unmeasurable thrust dynamics and the external disturbances. With the proposed observer, the immeasurability of thrust is addressed.
    \item A hierarchical integrated flight and propulsion control framework is proposed to guarantee the position-tracking performance. Both fixed-wing UAV and turbojet engine dynamics are considered.
\end{enumerate}

The remainder of this paper is organized as follows. In Section~\ref{sec-2}, the models of fixed-wing UAVs and turbojet engines are introduced. The integrated flight and propulsion control framework is proposed in Section~\ref{sec-3}. Section~\ref{sec-4} provides the numeric simulation. Some conclusions are drawn in Section~\ref{sec-5}.

{\bf Notations:} Let $\mathbb{R} := (-\infty, +\infty)$, $\mathbb{R}^+ := [0, +\infty)$ and $\mathbb{R}^n$ denote the $n$-dimensional Euclidean space. Throughout this paper, the argument of a function is sometimes omitted when no confusion occurs, for example, $h(t)$ and $h$ refer to the same function.

\section{FLIGHT AND PROPULSION SYSTEMS}
\label{sec-2}

In this section, we introduce an integrated flight and propulsion model, comprising a fixed-wing UAV and a turbojet engine. The decomposed form of the model is also presented to facilitate the controller design.

\subsection{Fixed-wing UAV Model}
The following equations govern the kinetics of the fixed-wing UAV in the presence of external disturbances \cite{Castañeda2017,Stevens2015}.
\begin{align}
    \dot{x} &= V \cos\gamma \cos\chi + d_x, \notag \\
    \dot{y} &= V \cos\gamma \sin\chi + d_y, \notag \\
    \dot{z} &= -V \sin\gamma + d_z, \notag \\
    \dot{V} &= \frac{T \cos\alpha\cos\beta - D}{m} - g\sin\gamma + d_V, \notag \\
    \dot{\gamma} &= \frac{(T\cos\alpha\sin\beta - Y)\sin\mu + (T\sin\alpha + L)\cos\mu}{mV} \notag \\
    & \quad - \frac{g\cos\gamma}{V} + d_\gamma, \notag \\
    \dot{\chi} &= \frac{(-T\cos\alpha\sin\beta + Y)\cos\mu }{mV\cos\gamma}\!+\!\frac{(T\sin\alpha + L)\sin\mu}{mV\cos\gamma}\!+\!d_\chi, \notag \\
    \dot{\boldsymbol{\omega}} &= \boldsymbol{J}^{-1}\boldsymbol{M} - \boldsymbol{J}^{-1} (\boldsymbol{\omega} \times \boldsymbol{J}\boldsymbol{\omega}) + \boldsymbol{d}_\omega,
    \label{equ-1}
\end{align}
where $x, y, z \in \mathbb{R}$ are the position coordinates of the UAV, $V \in \mathbb{R}^+$ is the airspeed, $\gamma \in \mathbb{R}$ is the flight path angle (FPA), $\chi \in \mathbb{R}$ is the heading angle, $\boldsymbol{\omega} = (p, q, r)^T$ is the angular velocity, in which $p \in \mathbb{R}$ is the roll rate, $q \in \mathbb{R}$ is the pitch rate, and $r \in \mathbb{R}$ is the yaw rate. $d_x, d_y, d_z, d_V, d_\gamma, d_\chi \in \mathbb{R}$ and $\boldsymbol{d}_\omega \in \mathbb{R}^3$ are the unknown but bounded external disturbances. $g \in \mathbb{R}^+$ is the gravitational acceleration, $\boldsymbol{J} \in \mathbb{R}^{3\times 3}$ is the inertia tensor, and $m \in \mathbb{R}^+$ is the mass of the UAV. $\alpha, \beta, \mu \in \mathbb{R}$ are aerodynamic Euler angles, i.e., the angle of attack (AOA), sideslip angle, and bank angle, which are governed by the following equations.

\begin{align}
    \label{equ-add1}
    \dot{\alpha} &= q - \frac{\cos\alpha\sin\beta}{\cos\beta} \cdot p - \frac{\sin\alpha\sin\beta}{\cos\beta} \cdot r - \frac{\cos\mu}{\cos\beta} \cdot \dot{\gamma} \notag \\
    & \quad - \frac{\cos\gamma\sin\mu}{\cos\beta} \cdot \dot{\chi} + d_\alpha, \notag \\
    \dot{\beta} &= \sin\alpha \cdot p - \cos\alpha \cdot r - \sin\mu \cdot \dot{\gamma} + \cos\gamma\cos\mu \cdot \dot{\chi} + d_\beta, \notag \\
    \dot{\mu} &= \cos\alpha\cos\beta \cdot p + \sin\beta \cdot q + \sin\alpha\cos\beta \cdot r \notag \\
    & \quad + \sin\beta\cdot\dot{\alpha} + \sin\gamma\cdot\dot{\chi} + d_\mu,
\end{align}
where $d_\alpha, d_\beta, d_\mu \in \mathbb{R}$ are the unknown but bounded external disturbances. $L, Y, D \in \mathbb{R}$ in \eqref{equ-1} are the aerodynamic forces, i.e., the lift, side force, and drag. The aerodynamic forces are defined as follows.
\begin{equation}
    \label{equ-2}
    L = \bar{q} S C_D, \quad Y = \bar{q} S C_Y, \quad D = \bar{q} S C_D,
\end{equation}
where $S \in \mathbb{R}^+$ is the reference area. The aerodynamic force coefficients $C_L, C_Y, C_D \in \mathbb{R}$ and the dynamic pressure $\bar{q} \in \mathbb{R}$ are introduced in \cite{Stevens2015}, i.e.
\begin{align}
    \bar{q} &= 0.5\rho V^2, \quad \rho = \rho_0 e^{-(h - h_0) / h_e}, \notag \\
    C_L &= C_{L0} + C_{L\alpha}\alpha + \frac{\bar{c}}{2V}(C_{L\dot{\alpha}} \dot{\alpha} + C_{Lq} q) + C_{L\delta e} \delta_e, \notag \\
    C_D &= C_{D0} + \frac{(C_L - C_{L0})^2}{\pi e AR} + C_{D\delta e} \delta_e + C_{D\delta a} \delta_a + C_{D\delta r} \delta_r, \notag \\
    C_Y &= C_{Y\beta} \beta + \frac{b}{2V} (C_{Yp} p\!+\!C_{Yr} r)\!+\!C_{Y\delta a} \delta_a\!+\!C_{Y\delta r} \delta_r,
    \label{equ-3}
\end{align}
where $\delta_e \in \mathbb{R}$ is the elevator, $\delta_a \in \mathbb{R}$ is the aileron, and $\delta_r \in \mathbb{R}$ is the rudder.

The inputs of \eqref{equ-1} are the thrust $T \in \mathbb{R}^+$ and the aerodynamic moment $\boldsymbol{M} = (M_x, M_y, M_z)^T$. The thrust $T$ is generated by the propulsion system, which will be introduced in the next section. The following definitions of $M_x, M_y$, and $M_z$ are adopted \cite{Stevens2015}.
\begin{equation}
    \label{equ-4}
    M_x = \bar{q} S b C_l, \quad M_y = \bar{q} S \bar{c} C_m, \quad M_z = \bar{q} S b C_n,
\end{equation}
where
\begin{align}
    C_l &= C_{l\beta} \beta + \frac{b}{2V}(C_{lp} p + C_{lr} r) + C_{l\delta a} \delta_a + C_{l\delta r} \delta_r, \notag \\
    C_m &= C_{m0} + C_{m\alpha} \alpha + \frac{\bar{c}}{2V} (C_{mq} q + C_{m\dot{\alpha}} \dot{\alpha}) + C_{m\delta e} \delta_e, \notag \\
    C_n &= C_{n\beta} \beta + \frac{b}{2V} (C_{np} p + C_{nr} r) + C_{n\delta a} \delta_a + C_{n\delta r} \delta_r,
    \label{equ-5}
\end{align}
are the aerodynamic moment coefficients. In \eqref{equ-3} and \eqref{equ-5}, $\delta_e$, $\delta_a$, and $\delta_r$ are the actuators to handle the UAV states, which are called the flap deflection angles.

\subsection{Turbojet Engine Model}

The turbojet engine model consists of component equations for the inlet, compressor, burner, turbine, nozzle, and rotor, all of which are combined using continuum equations \cite{Wang2022}.

{\it \textbf{1) Inlet:}} The environmental temperature $T_{s0} \in \mathbb{R}^+$ and pressure $P_{s0} \in \mathbb{R}^+$ are
\begin{align}
    \label{equ-6}
    T_{s0} &= \begin{cases}
        288.15 - 6.5H, & H \leq 11, \\
        216.5, &H > 11
    \end{cases} \notag \\
    P_{s0} &= \begin{cases}
        1.01325 \Big(1 - \frac{H}{44.308}\Big)^{5.2553}, & H \leq 11 \\
        0.22615 e^{(11 - H) / 6.338}, & H > 11
    \end{cases}
\end{align}
where $H = -z \times10^{-3}$ is the altitude.
The Mach number ${\rm Ma} \in \mathbb{R}^+$ is defined as
\begin{equation}
    \label{equ-7}
    {\rm Ma} := V / \sqrt{\kappa R T_{s0}},
\end{equation}
where $\kappa \in \mathbb{R}^+$ is the heat capacity ratio, and $R \in \mathbb{R}^+$ is the gas constant. The inlet total temperature is
\begin{equation}
    \label{equ-8}
    T_{t1} = T_{s0} \left( 1 + \frac{\kappa - 1}{2} {\rm Ma}^2 \right),
\end{equation}
and the inlet total pressure is
\begin{equation}
    \label{equ-9}
    P_{t1} = P_{s0} \left( 1 + \frac{\kappa - 1}{2} {\rm Ma}^2 \right)^{\frac{\kappa}{\kappa - 1}}.
\end{equation}
The inlet pressure recovery coefficient $\sigma_{\rm in} \in \mathbb{R}^+$ is defined as
\begin{equation}
    \label{equ-10}
    \sigma_{\rm in} := \begin{cases}
        1, & {\rm Ma} \leq 1.0, \\
        1 - 0.075({\rm Ma} - 1)^{1.35}, & {\rm Ma} > 1.0.
    \end{cases}
\end{equation}
The exit total temperature $T_{t2} \in \mathbb{R}^+$ and the exit total pressure $P_{t2} \in \mathbb{R}^+$ of the inlet are
\begin{equation}
    \label{equ-11}
    T_{t2} = T_{t1}, \quad P_{t2} = P_{t1} \sigma_{\rm in}.
\end{equation}

{\it \textbf{2) Compressor:}} The corrected speed of the compressor $n_{c,{\rm cor}} \in \mathbb{R}^+$ is defined as
\begin{equation}
    \label{equ-12}
    n_{c,{\rm cor}} := n \sqrt{T_{t2d}} / \big(n_d \sqrt{T_{t2}} \big),
\end{equation}
where $n_{c,{\rm cor}} \in \mathbb{R}^+$ is the corrected speed, $n \in \mathbb{R}^+$ is the speed of the rotor, $n_d$ is a constant representing the designed point of the rotating speed, and $T_{t2d}$ is a constant which represents the designed point of the compressor total temperature.
Define the pressure ratio coefficient $z_c \in [0, 1]$ as
\begin{equation}
    \label{equ-13}
    z_c := \frac{\pi_c - \pi_{c,{\rm min}}}{\pi_{c,{\rm max}} - \pi_{c,{\rm min}}},
\end{equation}
where $\pi_{c,{\rm min}}$ and $\pi_{c,{\rm max}}$ are constants which represent the minimum and maximum of the pressure ratio, respectively. The pressure ratio $\pi_c \in \mathbb{R}^+$, corrected flow $W_{a2,cor} \in \mathbb{R}^+$, and the efficient $\eta_c \in \mathbb{R}^+$ are
\begin{align}
    \label{equ-14}
    \pi_c &= f_{c,1} (n_{c,{\rm cor}}, z_c), \notag \\
    W_{a2,cor} &= f_{c,2} (n_{c,{\rm cor}}, z_c), \notag \\
    \eta_c &= f_{c,3} (n_{c,{\rm cor}}, z_c),
\end{align}
where $f_{c,1}, f_{c,2}, f_{c,3}$ are compressor characteristic functions. We further obtain the compressor states as follows.
\begin{align}
    \label{equ-15}
    P_{t3} &= P_{t2} \pi_c, \notag \\
    W_{a3} &= W_{a2} = W_{a2,cor} \frac{P_{t2}}{1.01325\sqrt{288.15/T_{t2}}}, \notag \\
    T_{t3} &= T_{t2} \bigg(\frac{\pi_c^{(\kappa - 1)/\kappa} - 1}{\eta_c} + 1 \bigg), \notag \\
    L_c &= W_{a2} C_p (T_{t3} - T_{t2}),
\end{align}
where $P_{t3} \in \mathbb{R}^+$ is the total pressure of the compressor outlet, $W_{a2} \in \mathbb{R}^+$ is the compressor inlet flow, $W_{a3} \in \mathbb{R}^+$ is the outlet flow, $T_{t3} \in \mathbb{R}^+$ is the total temperature of the compressor outlet, $L_c \in \mathbb{R}^+$ is the power of the compressor, and $C_p$ is a constant which represents the specific heat.

{\it \textbf{3) Burner:}} The gas-oil ratio $f_g \in \mathbb{R}^+$ is defined as
\begin{equation}
    \label{equ-16}
    f_g := W_f / W_{a3},
\end{equation}
where $W_f \in \mathbb{R}^+$ is the fuel flow. The total temperature $T_{t4} \in \mathbb{R}^+$ and total pressure $P_{t4} \in \mathbb{R}^+$ of the burner outlet are
\begin{align}
    \label{equ-17}
    T_{t4} &= T_{t3} + f H_u \eta_b / C_p, \notag \\
    P_{t4} &= \sigma_b P_{t3},
\end{align}
where $H_u \in \mathbb{R}^+$ is the fuel oil low calorific value, $\eta_b \in \mathbb{R}^+$ is the combustion efficiency, and $\sigma_b \in \mathbb{R}^+$ is the burner pressure recovery coefficient.

{\it \textbf{4) Turbine:}} The corrected speed of the turbine $n_{t,{\rm cor}} \in \mathbb{R}^+$ is defined as
\begin{equation}
    \label{equ-18}
    n_{t,{\rm cor}} := n \sqrt{T_{t4d}} / \big(n_d \sqrt{T_{t4}}\big),
\end{equation}
where $T_{t4d} \in \mathbb{R}^+$ is the designed point of the turbine inlet total temperature. The flow ratio coefficient $w_t \in [0, 1]$ is defined as
\begin{equation}
    \label{equ-19}
    w_t := \frac{W_{g4,{\rm cor}} - W_{g4{\rm cor,min}}}{W_{g4{\rm cor,max}} - W_{g4{\rm cor,min}}},
\end{equation}
where $W_{g4{\rm cor,min}}$ and $W_{g4{\rm cor,max}}$ are constants which represent the minimum and maximum of the corrected flow, respectively. The corrected flow $W_{g4,{\rm cor}}$, the corrected power $L_t / T_{t4}$, and the efficient $\eta_t$ are
\begin{align}
    \label{equ-20}
    W_{g4,{\rm cor}} &= f_{t,1} (n_{t,{\rm cor}}, w_t), \notag \\
    L_t / T_{t4} &= f_{t,2} (n_{t,{\rm cor}}, z_c), \notag \\
    \eta_t &= f_{t,3} (n_{t,{\rm cor}}, z_c),
\end{align}
where $f_{t,1}, f_{t,2}, f_{t,3}$ are turbine characteristic functions. We further obtain the turbine states as follows.
\begin{align}
    \label{equ-21}
    L_t &= T_{t4}(L_t / T_{t4}), \notag \\
    T_{t5} &= T_{t4} - L_t / C_p, \notag \\
    \pi_t &= (1 - (1 - T_{t5} / T_{t4}) / \eta_t)^{-\kappa/(\kappa - 1)}, \notag \\
    P_{t5} &= P_{t4} / \pi_t, \notag \\
    W_{g4} &= W_{g4,cor} \frac{P_{t4}}{1.01325\sqrt{288.15 / T_{t4}}},
\end{align}
where $L_t \in \mathbb{R}^+$ is the power of the turbine, $T_{t5} \in \mathbb{R}^+$ is the total temperature of the turbine outlet, $\pi_t \in \mathbb{R}^+$ is the pressure ratio, $P_{t5} \in \mathbb{R}^+$ is the total pressure of the turbine outlet, $W_{g4} \in \mathbb{R}^+$ is the turbine outlet flow. 

{\it \textbf{5) Nozzle:}} The nozzle inlet total temperature $T_{t7} \in \mathbb{R}^+$ and total pressure  $P_{t7} \in \mathbb{R}^+$ are
\begin{equation}
    \label{equ-22}
    T_{t7} = T_{t5}, \quad P_{t7} = P_{t5}.
\end{equation}
The critical pressure ratio of nozzle $\pi_{n,cr} \in \mathbb{R}^+$ is defined as
\begin{equation}
    \label{equ-23}
    \pi_{n,cr} = \bigg(\frac{\kappa + 1}{2}\bigg)^{\kappa/(\kappa - 1)}.
\end{equation}
If $P_{t7} / P_{s0} < \pi_{n,cr}$, we have
\begin{align}
    \label{equ-24}
    P_{s8} &= P_{s0}, \notag \\
    {\rm Ma}_8 &= \sqrt{\frac{2\Big((P_{t7} / P_{s0})^{\frac{\kappa - 1}{\kappa}} - 1\Big)}{\kappa - 1}},
\end{align}
where $P_{s8} \in \mathbb{R}^+$ is the nozzle static pressure, ${\rm Ma}_8 \in \mathbb{R}^+$ is the Mach number of the nozzle flow. Otherwise, if $P_{t7} / P_{s0} \geq \pi_{n,cr}$, we have
\begin{equation}
    \label{equ-25}
    P_{s8} = P_{t7} / \pi_{n, cr}, \quad {\rm Ma}_8 = 1.
\end{equation}
We can further obtain the static temperature of the nozzle $T_{s8} \in \mathbb{R}^+$, the nozzle flow speed $v_8 \in \mathbb{R}^+$, and the nozzle flow $W_{g8} \in \mathbb{R}^+$ as follows.
\begin{align}
    \label{equ-26}
    T_{s8} &= T_{t7} \bigg(1 + \frac{\kappa - 1}{2} {\rm Ma}_8^2 \bigg), \notag \\
    v_8 &= {\rm Ma}_8 \sqrt{\kappa R T_{s8}}, \notag \\
    W_{g8} &= \frac{K_m P_{t7} A_8 Q \sqrt{288.15}}{1.01325\sqrt{T_{t7}}},
\end{align}
where $A_8 \in \mathbb{R}^+$ is the area of the nozzle, and
\begin{align}
    \label{equ-27}
    Q &= {\rm Ma}_8 \bigg(\frac{2}{\kappa+1} \bigg(1 + \frac{\kappa - 1}{2} \bigg) {\rm Ma}_8^2 \bigg)^{-\frac{\kappa + 1}{2(\kappa - 1)}}, \notag \\
    K_m &= \sqrt{\frac{\kappa}{R}\bigg(\frac{2}{\kappa + 1}\bigg)^{\frac{\kappa + 1}{\kappa - 1}}}.
\end{align}
The thrust generated by the turbojet engine is
\begin{equation}
    \label{equ-28}
    T = W_{g8}(v_8 - v_0) + A_8 (P_{s8} - P_{s0}).
\end{equation}

{\it \textbf{6) Rotor:}} The rotor speed $n$ is governed by the following equation.
\begin{equation}
    \label{equ-29}
    \dot{n} = \frac{\eta_m L_t - L_c}{n J_r (\pi/30)^2},
\end{equation}
where $\eta_m \in \mathbb{R}^+$ is the turbine mechanical efficiency, and $J_r \in \mathbb{R}^+$ is the inertia of the rotor.

{\it \textbf{7) Continuum Equations:}} All the above components are combined by the following continuum equations.
\begin{equation}
    \label{equ-30}
    W_{g4} - W_{a3} - W_f = 0, \quad W_{g8} - W_{g4} = 0.
\end{equation}

\subsection{Model Decomposition}
In this subsection, the complex models of the fixed-wing UAV and the turbojet engine are decomposed and linearized to yield several subsystems for controller design. In particular, the model is divided into three subsystems, including position, airspeed, and flight direction.

{\it \textbf{1) Position Subsystem:}} Denote the position of the vehicle as $\boldsymbol{p} = (x, y, z)^T$ and the velocity as $\boldsymbol{v} = (V, \gamma,\chi)^T$. The position subsystem is governed by the following equation.
\begin{equation}
    \label{equ-31}
    \dot{\boldsymbol{p}} = \boldsymbol{v}_e + \boldsymbol{d}_p = f_p(\boldsymbol{v}) + \boldsymbol{d}_p,
\end{equation}
where $\boldsymbol{v}_e = (\dot{x}, \dot{y}, \dot{z})^T$ is the velocity in the earth frame, $\boldsymbol{d}_p = (d_x, d_y, d_z)^T$ is the external disturbance, and
\begin{equation}
    \label{equ-32}
    f_p(\boldsymbol{v}) = \begin{bmatrix}
        V \cos\gamma \cos\chi \\
        V \cos\gamma \sin\chi \\
        -V \sin\gamma
    \end{bmatrix}.
\end{equation}

{\it \textbf{2) Airspeed Subsystem:}} The motion dynamics of the airspeed $V$ is formulated by the following equations.
\begin{align}
    \label{equ-33}
    \dot{V} &= f_v + g_v \Delta T + d_V, \notag \\
    \Delta \dot{n} &= a_n \Delta n + b_n \Delta W_f + d_{\Delta n}, \notag \\
    \Delta T &= c_n \Delta n + d_n \Delta W_f + d_{\Delta T},
\end{align}
where $\Delta T = (T - T_0) / T_0$, $\Delta n = (n - n_0) / n_0$, and $\Delta W_f = (W_f - W_{f0}) / W_{f0}$ with $T_0, n_0, W_{f0} \in \mathbb{R}^+$ as the equilibrium point of the turbojet engine. In \eqref{equ-33}, $d_{\Delta n}, d_{\Delta T} \in \mathbb{R}$ are the linearization errors, $a_n, b_n, c_n, d_n \in \mathbb{R}$ are known coefficients, and
\begin{align}
    \label{equ-34}
    f_v &= - D / m - g \sin\gamma + T_0\cos\alpha \cos\beta / m, \notag \\
    g_v &= T_0 \cos\alpha \cos\beta / m.
\end{align}

{\it \textbf{3) Flight Direction Subsystem:}} Denote the flight direction as $\boldsymbol{\psi} = (\gamma, \chi)^T$. The motion dynamics of the flight direction is formulated as follows.
\begin{align}
    \label{equ-35}
    \dot{\boldsymbol{\psi}} &= f_\psi + g_\psi \boldsymbol{F}_a + \boldsymbol{d}_\psi, \notag \\
    \dot{\boldsymbol{\Theta}} &= f_\theta + g_\theta \boldsymbol{\omega} + \boldsymbol{d}_\Theta, \notag \\
    \dot{\boldsymbol{\omega}} &= f_\omega + g_\omega \boldsymbol{M} + \boldsymbol{d}_\omega,
\end{align}
where $\boldsymbol{\Theta} = (\alpha, \beta, \mu)^T$ is the aerodynamic Euler angle, $\boldsymbol{d}_\psi = (d_\gamma, d_\chi)^T$ and $\boldsymbol{d}_\Theta = (d_\alpha, d_\beta, d_\mu)^T$ are external disturbances. In \eqref{equ-35}, $\boldsymbol{F}_a = (L, Y)^T$ is the aerodynamic driving force, and $\boldsymbol{M} = (M_x, M_y, M_z)^T$ is the aerodynamic moment with the following forms.
\begin{align}
    \label{equ-36}
    \boldsymbol{F}_a &= f_a(\boldsymbol{\theta}) = f_F + g_F \boldsymbol{\theta}, \notag \\
    \boldsymbol{M} &= f_m(\boldsymbol{\delta}) = f_M + g_M \boldsymbol{\delta},
\end{align}
where $\boldsymbol{\delta} = (\delta_e, \delta_a, \delta_r)^T$ is the wing flap, which serves as one of the system inputs. In \eqref{equ-36}, $\boldsymbol{\theta} = (\alpha, \beta)^T$ is the aerodynamic angle, and
\begin{align}
    \label{equ-37}
    f_F &= \bar{q}S \begin{bmatrix}
        C_{L0} + (C_{L\dot{\alpha}} \dot{\alpha} + C_{Lq}q)\bar{c} / 2V + C_{L\delta e} \delta_e \\
        (C_{Yp} p + C_{Yr} r)b/2V + C_{Y\delta a}\delta_a + C_{Y\delta r}\delta_r
    \end{bmatrix}, \notag \\
    g_F &= \bar{q}S \begin{bmatrix}
        C_{L\alpha} & 0 \\
        0 & C_{Y\beta}
    \end{bmatrix}, \notag \\
    f_M &= \bar{q} S \begin{bmatrix}
        C_{l\beta} \beta + (C_{lp} p+ C_{lr} r)b / 2V \\
        C_{m0} + C_{m\alpha} \alpha + (C_{mq} q + C_{m \dot{\alpha}} \dot{\alpha})\bar{c} / 2V \\
        C_{n\beta} + (C_{np} p + C_{nr} r)b / 2V
    \end{bmatrix}, \notag \\
    g_M &= \bar{q}S \begin{bmatrix}
        0 & b C_{l\delta a} & b C_{l \delta r} \\
        \bar{c} C_{m \delta e} & 0 & 0 \\
        0 & b C_{n \delta a} & b C_{n \delta r}
    \end{bmatrix}.
\end{align}
The other terms in \eqref{equ-35} are in the following forms.
\begin{align}
    \label{equ-38}
    f_\psi &= \begin{bmatrix}
        \frac{T(\cos\alpha\sin\beta\sin\mu + \sin\alpha\cos\mu)}{mV} - \frac{g\cos\gamma}{V} \\
        \frac{T(\cos\alpha\sin\beta\cos\mu + \sin\alpha\sin\mu)}{mV\cos\gamma}
    \end{bmatrix}, \notag \\
    g_\psi &= \begin{bmatrix}
        \frac{\cos\mu}{mV} & \frac{\sin\mu}{mV} \\
        \frac{\sin\mu}{mV\cos\gamma} & \frac{\cos\mu}{mV\cos\gamma}
    \end{bmatrix}, \notag \\
    f_\theta &= \begin{bmatrix}
        \dot{\gamma} \cos\mu / \cos\beta + \dot{\chi} \cos\gamma \sin\mu / \cos\beta \\
        -\dot{\gamma} \sin\mu + \dot{\chi} \cos\gamma \cos\mu \\
        \dot{\alpha} \sin\beta + \dot{\chi} \sin\gamma
    \end{bmatrix}, \notag \\
    g_\theta &= \begin{bmatrix}
        \cos\alpha\cos\beta / \sin\beta & 1 & \sin\alpha\sin\beta / \cos\beta \\
        \sin\alpha & 0 & -\cos\alpha \\
        \cos\alpha \cos\beta & \sin\beta & \sin\alpha \cos\beta
    \end{bmatrix}, \notag \\
    f_\omega &= -\boldsymbol{J}^{-1}(\boldsymbol{\omega} \times \boldsymbol{J\omega}), \quad g_\omega = \boldsymbol{J}^{-1}.
\end{align}

The following lemma shows the stability of the cascade system, which is used in our stability analysis.
\begin{lemma}
    \label{lem-1}
    \rm
    \cite{Isidori1995} Consider a cascade system
    \begin{equation}
        \label{equ-38-1}
        \dot{\xi} = g(\xi, \epsilon), \quad \dot{\epsilon} = h(\epsilon).
    \end{equation}
    If $\dot{\xi} = g(\xi, 0)$ is asymptotically stable at $\xi = 0$ and $\dot{\epsilon} = h(\epsilon)$ is asymptotically stable at $\epsilon = 0$, then \eqref{equ-38-1} is asymptotically stable at $(\xi, \epsilon) = (0, 0)$.
\end{lemma}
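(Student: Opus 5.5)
The plan is to prove the statement in its local form. I assume, as is implicit in \cite{Isidori1995}, that $g$ and $h$ are locally Lipschitz with $g(0,0)=0$ and $h(0)=0$. The idea is to build a Lyapunov function for the unforced $\xi$-subsystem and treat $\epsilon$ as a vanishing perturbation. The proof has two steps: Lyapunov stability of $(\xi,\epsilon)=(0,0)$, then attractivity.

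\textbf{Step 1 (converse Lyapunov function).} Since $\dot{\xi}=g(\xi,0)$ is asymptotically stable at $\xi=0$, the converse Lyapunov theorem (Kurzweil/Massera) gives a $C^1$ function $V$ on a ball $B_r$ and class-$\mathcal{K}$ functions $\alpha_1,\alpha_2,\alpha_3$ such that
\begin{equation*}
\alpha_1(\|\xi\|)\le V(\xi)\le \alpha_2(\|\xi\|),\qquad \frac{\partial V}{\partial \xi}\, g(\xi,0)\le -\alpha_3(\|\xi\|).
\end{equation*}
Shrinking $r$ if necessary, $\|\partial V/\partial\xi\|\le c_1$ on $B_r$, and $\|g(\xi,\epsilon)-g(\xi,0)\|\le L\|\epsilon\|$ for $\|\xi\|\le r$, $\|\epsilon\|\le r$. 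Along solutions of \eqref{equ-38-1} this yields
\begin{equation*}
\dot V \le -\alpha_3(\|\xi\|) + c_1 L\|\epsilon\|.
\end{equation*}
I expect this step to be the main technical obstacle. It needs enough regularity for the converse theorem and for uniqueness of solutions, which is why I impose the local Lipschitz assumption. Everything after it is comparison arguments.

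\textbf{Step 2 (stability).} Fix $\varepsilon_0\in(0,r)$ and choose $c>0$ with $\{V\le c\}\subset B_{\varepsilon_0}$. Let $\eta=\alpha_3(\alpha_2^{-1}(c/2))$. On the annulus $\{c/2\le V\le c\}$ we have $\|\xi\|\ge \alpha_2^{-1}(c/2)$, so $\dot V\le -\eta+c_1L\|\epsilon\|$ there. Because $\dot\epsilon=h(\epsilon)$ is stable, there is $\delta_\epsilon$ such that $\|\epsilon(0)\|<\delta_\epsilon$ implies $\|\epsilon(t)\|<\min\{\varepsilon_0,\eta/(2c_1L)\}$ for all $t\ge0$. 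Then $\dot V<0$ on the annulus, so $\{V\le c\}$ is forward invariant. Hence initial data with $V(\xi(0))\le c$ and $\|\epsilon(0)\|<\delta_\epsilon$ remain in $B_{\varepsilon_0}\times B_{\varepsilon_0}$, which proves stability.

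\textbf{Step 3 (attractivity).} Take initial data in the neighborhood of Step~2 that also lies in the basin of attraction of $\epsilon=0$, so that $\epsilon(t)\to0$ and $\xi(t)$ stays in $B_r$. Given any small level $c'\in(0,c)$, set $\eta'=\alpha_3(\alpha_2^{-1}(c'/2))$. Pick $T$ such that $c_1L\|\epsilon(t)\|<\eta'/2$ for all $t\ge T$. For $t\ge T$, whenever $V(\xi(t))\ge c'/2$ we get $\dot V\le-\eta'/2$. So $V$ enters $\{V\le c'\}$ in finite time, and by the invariance argument of Step~2 applied at level $c'$ it stays there. Since $c'$ is arbitrary, $V(\xi(t))\to0$, and therefore $\xi(t)\to0$. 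Together with $\epsilon(t)\to0$ this gives asymptotic stability of $(0,0)$ for \eqref{equ-38-1}.
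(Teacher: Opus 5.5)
Your proof is correct. The paper gives no argument of its own for this lemma; it simply quotes it from \cite{Isidori1995}. The comparison is therefore between a bare citation, which buys brevity, and your self-contained proof.

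Your route is the standard converse-Lyapunov/vanishing-perturbation argument. The bound $\dot V\le-\alpha_3(\|\xi\|)+c_1L\|\epsilon\|$ is a local input-to-state estimate for the $\xi$-subsystem with $\epsilon$ as input. Steps 2 and 3 then use only two facts about the driving trajectory: $\|\epsilon(t)\|$ stays uniformly small when $\|\epsilon(0)\|$ is small, and $\epsilon(t)\to0$. Arguing along trajectories is the economical choice here. A composite function $V(\xi)+kU(\epsilon)$ built on the same bound would need $k\,\alpha_U(\|\epsilon\|)\ge c_1L\|\epsilon\|$ near the origin, which fails for instance when $\dot\epsilon=-\epsilon^3$.

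What your version buys over the citation is that it makes explicit the hypotheses the paper leaves implicit: $g(0,0)=0$, $h(0)=0$, and local Lipschitz continuity for the converse theorem and for uniqueness. It also shows that the conclusion is purely local.

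That is worth keeping in mind when reading the rest of the paper. There the lemma is invoked with $\epsilon$ an observer error obeying $\dot e_o=\boldsymbol{L}_o e_o+(0^T,h_s^T)^T$. This is a time-varying system for which $e_o=0$ is not even an equilibrium unless $h_s\equiv0$, and the paper uses it to conclude convergence from arbitrary initial conditions. Neither the lemma as stated nor your proof covers that use. It would require a global, input-driven variant, for instance a global ISS property of each error subsystem combined with $\epsilon(t)\to0$.
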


\subsection{Control Objective}

In this paper, we consider an integrated model of the fixed-wing UAV equipped with a turbojet engine, as presented in \eqref{equ-31}, \eqref{equ-33}, and \eqref{equ-35}. Our goal is to design a tracking controller to guarantee the convergence of the position $\boldsymbol{p}(t)$ to the desired trajectory $\boldsymbol{p}_d(t)$ and the wind-axis bank angle $\mu(t)$ to the desired bank angle $\mu_d(t)$, that is,
\begin{align}
    \label{equ-39}
    \lim_{t \to \infty} \big( \boldsymbol{p}(t) - \boldsymbol{p}_d(t) \big) &= 0, \notag \\
    \lim_{t \to \infty} \big(\mu(t) - \mu_d(t) \big) &= 0.
\end{align}
In \eqref{equ-39}, the desired trajectory $\boldsymbol{p}_d$ and the bank angle $\mu_d$ are designed to be differentiable.

\begin{assum}
    \label{assum-1}
    \rm
    In \eqref{equ-31}, \eqref{equ-33}, and \eqref{equ-35}, the following holds.
    \begin{enumerate}
        \item The derivative of the disturbance exists and is bounded and vanishing. That is, $\dot{d} = h(t)$, where $h(t)$ is an unknown but bounded function and $\lim_{t \to \infty} h(t) = 0$.
        \item The system coefficients $a_n, b_n, c_n, d_n \neq 0$.
    \end{enumerate}
\end{assum}

How to design an appropriate tracking control strategy to achieve this control goal will be addressed in the following sections.

\section{HIERARCHICAL CONTROL STRATEGY}
\label{sec-3}

In this section, a tracking control strategy within the hierarchical framework with disturbance compensation is proposed for driving the fixed-wing UAV, as outlined in Fig.~\ref{fig-1}. We first present the extended disturbance observer and the tracking controller design in detail, and then conduct the stability analysis of the closed-loop system.

\begin{figure*}
    \centering
    \includegraphics[width=18cm]{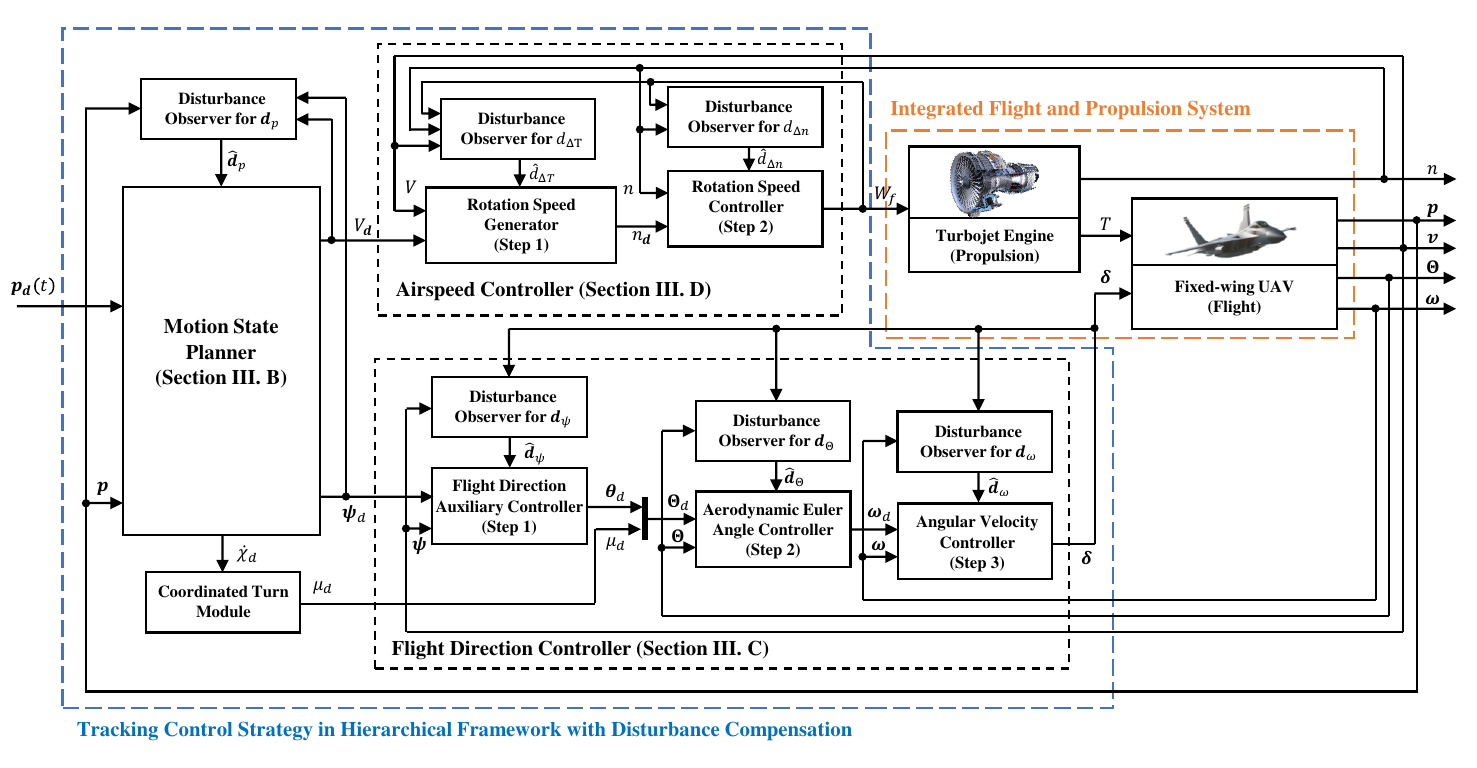}
    \caption{Schematic diagram of the hierarchical tracking control strategy for the fixed-wing UAV equipped with a turbojet engine.}
    \label{fig-1}
\end{figure*}

\subsection{Extended Disturbance Observer}
In this subsection, an extended disturbance observer for the system in the following form is presented, and will be utilized in the design of the tracking controller later.
\begin{equation}
    \label{equ-40}
    \dot{s} = f_s(s) s + g_s(s) u_s + w_{s0} d_s,
\end{equation}
where $s$ is the system state, $u_s$ is the system input, $d_s$ is the external disturbance, $w_{s0}$ is the constant disturbance coefficient, $f_s$ and $g_s$ are the system functions, all of which are appropriate dimensions.
The disturbance observer is inspired by the Luenberger observer \cite{Adamy2022} and designed as follows.
\begin{align}
    \label{equ-41}
    \dot{\hat{s}} &= f_s(s) s + g_s(s)u_s + w_{s0} \hat{d}_s + k_{o1} (s - \hat{s}), \notag \\
    \dot{\hat{d}}_s &= k_{o2} (s - \hat{s}),
\end{align}
where $\hat{s}$ and $\hat{d}$ are the estimations of the state and the disturbance, respectively. In \eqref{equ-41}, $k_{o1}$ and $k_{o2}$ are observer parameters. The observer matrix is defined as
\begin{equation}
    \label{equ-42}
    \boldsymbol{L}_o := \begin{bmatrix}
        -k_{o1} & w_{s0} \\
        -k_{o2} & 0
    \end{bmatrix}.
\end{equation}
Define the observer errors as $e_o := (e_s^T, e_d^T)^T$, where $e_s := s - \hat{s}$ and $e_d := d_s - \hat{d}_s$.
From \eqref{equ-40} and \eqref{equ-41}, we have
\begin{align}
    \label{equ-44}
    \dot{e}_s &= \dot{s} - \dot{\hat{s}} = -k_{o1} e_s + w_{s0} e_d, \notag \\
    \dot{e}_d &= \dot{d}_s - \dot{\hat{d}}_s = -k_{o2} e_s + h_s(t),
\end{align}
where $\dot{d}_s = h_s(t)$ and $\lim_{t \to \infty} h_s(t) = 0$ under Assumption~\ref{assum-1}. The following theorem shows the convergence of the observer error $e_o$.

\begin{theorem}
    \label{theorem-1}
    \rm
    Under Assumption~\ref{assum-1} and the disturbance observer designed as \eqref{equ-41}. If the observer matrix $\boldsymbol{L}_o$ is Hurwitz by selecting appropriate observer parameters $k_{o1}$ and $k_{o2}$, then the observer error $e_o$ is asymptotically stable at zero.
\end{theorem}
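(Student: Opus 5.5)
The plan is to treat the error dynamics \eqref{equ-44} as a linear time-invariant system driven by a vanishing input. In stacked form, with $e_o=(e_s^T,e_d^T)^T$, \eqref{equ-44} reads
\begin{equation*}
    \dot{e}_o = \boldsymbol{L}_o e_o + B h_s(t), \qquad B := \begin{bmatrix} 0 \\ I \end{bmatrix},
\end{equation*}
where $\boldsymbol{L}_o$ is exactly the observer matrix \eqref{equ-42}. The first step is to verify this identification directly from \eqref{equ-40} and \eqref{equ-41}: subtracting the two state equations cancels $f_s(s)s+g_s(s)u_s$, because the observer uses the measured state $s$. What remains is $-k_{o1}e_s+w_{s0}e_d$. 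Likewise, $\dot d_s-\dot{\hat d}_s=h_s-k_{o2}e_s$. The homogeneous part is therefore time-invariant, even though $f_s$ and $g_s$ depend on $s$.

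The second step uses the hypothesis that $\boldsymbol{L}_o$ is Hurwitz. This gives a symmetric $P>0$ solving $\boldsymbol{L}_o^TP+P\boldsymbol{L}_o=-I$. Equivalently, it gives constants $c,\lambda>0$ with $\|e^{\boldsymbol{L}_o t}\|\le c e^{-\lambda t}$. With $h_s\equiv 0$, the origin is then globally exponentially stable.

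The third step handles the input $h_s(t)$. By the variation of constants formula,
\begin{equation*}
    e_o(t)=e^{\boldsymbol{L}_o (t-t_0)}e_o(t_0)+\int_{t_0}^{t} e^{\boldsymbol{L}_o(t-\tau)}Bh_s(\tau)\,d\tau .
\end{equation*}
The first term decays exponentially. For the convolution term, given $\varepsilon>0$, choose $T_1$ such that $\|h_s(\tau)\|<\varepsilon$ for $\tau\ge T_1$; this is possible by Assumption~\ref{assum-1}. Split the integral at $T_1$:
\begin{itemize}
    \item The part over $[t_0,T_1]$ is bounded by $c\|B\|\sup|h_s|\,e^{-\lambda(t-T_1)}/\lambda$, which tends to zero because $h_s$ is bounded.
    \item The part over $[T_1,t]$ is bounded by $c\|B\|\varepsilon/\lambda$.
\end{itemize}
Hence $\limsup_{t\to\infty}\|e_o(t)\|\le c\|B\|\varepsilon/\lambda$ for every $\varepsilon>0$, and so $e_o(t)\to0$.

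The same estimate also gives the stability part of the statement: $\|e_o(t)\|\le c\|e_o(t_0)\|+c\|B\|\sup|h_s|/\lambda$. This is an input-to-state stability bound for the error system with input $h_s$.

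The main obstacle is the precise meaning of "asymptotically stable at zero", since $h_s$ is an exogenous, non-vanishing-at-finite-time input and the origin is not an equilibrium. I would state the conclusion as: $e_o$ is input-to-state stable with respect to $h_s$, and $e_o(t)\to 0$ as $t\to\infty$, using the Lyapunov function $V=e_o^TPe_o$. Along solutions it satisfies $\dot V\le -\|e_o\|^2+2\|PB\|\|e_o\|\|h_s\|$, which recovers the same ISS and converging-input-converging-state conclusion. For this reason I would not invoke Lemma~\ref{lem-1}, whose hypotheses require an autonomous driving system.
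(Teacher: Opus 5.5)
Your proposal is correct and follows the paper's route. The paper likewise writes the error dynamics as $\dot{e}_o = \boldsymbol{L}_o e_o + (0^T, h_s^T)^T$ and, using that $\boldsymbol{L}_o$ is Hurwitz and $h_s(t)\to 0$, concludes by citing \cite[Lemma~2]{Li2012}; you instead prove that converging-input--converging-state fact yourself by variation of constants and splitting the convolution integral. Your reading of the conclusion as an ISS bound plus $e_o(t)\to 0$, since the origin is not an equilibrium while $h_s\neq 0$, is an accurate sharpening of the paper's looser phrase ``asymptotically stable at zero.''
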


\begin{IEEEproof}
    
    The observer error $e_o$ is governed by \eqref{equ-44}, which can be rewritten into
    \begin{equation}
        \label{equ-45}
        \dot{e}_o = \boldsymbol{L}_o e_o + (0^T, h_s^T)^T.
    \end{equation}
    Under Assumption~\ref{assum-1}, we have $\lim_{t \to \infty} h_s(t) = 0$. Since $\boldsymbol{L}_o$ is Hurwitz, we can further follow \cite[Lemma~2]{Li2012} to derive that the observation error $e_o$ is asymptotically stable at zero.
\end{IEEEproof}

Theorem~1 shows that the observer error asymptotically converges to zero. Hence, the disturbance observer \eqref{equ-41} can be used in the controller design afterwards to implement disturbance dynamic compensation.

\subsection{Motion State Planner}

In this subsection, a motion state planner is proposed to calculate the desired motion state, including airspeed and flight direction, for tracking the desired position trajectories.

The position-tracking error is defined as $\boldsymbol{e}_p := \boldsymbol{p} - \boldsymbol{p}_d$, which is governed by
\begin{equation}
    \label{equ-46}
    \dot{\boldsymbol{e}}_p = \boldsymbol{v}_e + \boldsymbol{d}_p - \dot{\boldsymbol{p}}_d,
\end{equation}
where $\boldsymbol{v}_e = (v_x, v_y, v_z)^T$ is the velocity in the earth frame with $v_x, v_y, v_z \in \mathbb{R}$ as the velocity components. The motion state planner is designed as follows.
\begin{align}
    \label{equ-47}
    \boldsymbol{v}_{ed} &= (v_{xd}, v_{yd}, v_{zd})^T =  -k_p \boldsymbol{e}_p - \hat{\boldsymbol{d}_p} + \dot{\boldsymbol{p}}_d, \notag \\
    V_d &= \sqrt{v_{xd}^2 + v_{yd}^2 + v_{zd}^2}, \notag \\
    \gamma_d &= \arcsin(-v_{zd} / V_d), \notag \\
    \chi_d &= {\rm atan2}(v_{yd}, v_{xd}), 
\end{align}
where $k_p > 0$ is the controller gain, $\hat{\boldsymbol{d}_p}$ is the estimation of $\boldsymbol{d}_p$ obtained by the disturbance observer \eqref{equ-41}. In \eqref{equ-47}, ${\rm atan2}(\cdot)$ is the four-quadrant inverse tangent to ensure $\chi_d \in [-\pi, \pi]$.

The desired motion state is denoted as $\boldsymbol{v}_d = (V_d^T, \boldsymbol{\psi}^T_d)^T$, where $V_d$ is the desired airspeed and $\boldsymbol{\psi}_d = (\gamma_d, \chi_d)^T$ is the desired flight direction.
Define the motion state tracking error as $\boldsymbol{e}_v := \boldsymbol{v} - \boldsymbol{v}_{d}$, and the earth frame velocity tracking error as $\boldsymbol{e}_{ve} := \boldsymbol{v}_{e} - \boldsymbol{v}_{ed}$. We have the following lemma to establish the convergence of the position-tracking error $\boldsymbol{e}_p$.
\begin{lemma}
    \label{lem-2}
    \rm
    Consider the position subsystem \eqref{equ-31}. Under Assumption~\ref{assum-1}, if the motion state planner is designed as \eqref{equ-47} and the motion state tracking error $\boldsymbol{e}_v = 0$, then the position-tracking error $\boldsymbol{e}_p$ is asymptotically stable at zero.
\end{lemma}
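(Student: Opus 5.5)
The plan is to show that when $\boldsymbol{e}_v = 0$ the earth-frame velocity reproduces the planner command exactly. The position error then satisfies a stable linear system driven by the observer error, and Theorem~\ref{theorem-1} together with Lemma~\ref{lem-1} finishes the argument.

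\textbf{Step 1: the condition $\boldsymbol{e}_v = 0$ gives $\boldsymbol{e}_{ve} = 0$.} From $\boldsymbol{e}_v = 0$ we have $V = V_d$, $\gamma = \gamma_d$ and $\chi = \chi_d$. By \eqref{equ-47}, $V_d \cos\gamma_d = \sqrt{v_{xd}^2 + v_{yd}^2}$ and $-V_d\sin\gamma_d = v_{zd}$. Since $\chi_d = {\rm atan2}(v_{yd}, v_{xd})$, it follows that $V_d\cos\gamma_d\cos\chi_d = v_{xd}$ and $V_d\cos\gamma_d\sin\chi_d = v_{yd}$. Hence $f_p(\boldsymbol{v}_d) = \boldsymbol{v}_{ed}$, and so $\boldsymbol{v}_e = f_p(\boldsymbol{v}) = \boldsymbol{v}_{ed}$, i.e., $\boldsymbol{e}_{ve} = 0$.

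This step requires $V_d > 0$, so that $\gamma_d$ is well defined. It also relies on $\arcsin$ returning $\cos\gamma_d \ge 0$, which is consistent with the horizontal speed $\sqrt{v_{xd}^2 + v_{yd}^2} \ge 0$. I expect this geometric inversion to be the main obstacle, because it is where well-posedness enters. I would handle it by assuming $\boldsymbol{v}_{ed} \neq 0$, which holds generically for a nonzero desired velocity $\dot{\boldsymbol{p}}_d$. At $V_d = 0$ the identity $f_p(\boldsymbol{v}_d) = \boldsymbol{v}_{ed}$ still holds trivially, since both sides vanish.

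\textbf{Step 2: error dynamics.} Substituting $\boldsymbol{v}_e = \boldsymbol{v}_{ed}$ into \eqref{equ-46} gives
\begin{equation*}
\dot{\boldsymbol{e}}_p = -k_p \boldsymbol{e}_p - \hat{\boldsymbol{d}}_p + \boldsymbol{d}_p = -k_p \boldsymbol{e}_p + \boldsymbol{e}_{d},
\end{equation*}
where $\boldsymbol{e}_d = \boldsymbol{d}_p - \hat{\boldsymbol{d}}_p$ is the disturbance-estimation error. This error is produced by applying observer \eqref{equ-41} to \eqref{equ-31}, with $s = \boldsymbol{p}$, $f_s s + g_s u_s = \boldsymbol{v}_e$ and $w_{s0} = I$.

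\textbf{Step 3: cascade argument.} The closed loop is a cascade of the form \eqref{equ-38-1}, with $\xi = \boldsymbol{e}_p$ and $\epsilon = e_o$. The unforced system $\dot{\boldsymbol{e}}_p = -k_p \boldsymbol{e}_p$ is exponentially stable because $k_p > 0$. By Theorem~\ref{theorem-1}, $e_o \to 0$ under Assumption~\ref{assum-1} once $\boldsymbol{L}_o$ is chosen Hurwitz.

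One caveat is that the observer error system is driven by the vanishing signal $h_s(t)$, so it is not exactly autonomous as Lemma~\ref{lem-1} requires. To avoid relying on the lemma there, I would conclude directly. The $\boldsymbol{e}_p$ subsystem is linear and exponentially stable, so it is input-to-state stable with respect to $\boldsymbol{e}_d$. Because $\boldsymbol{e}_d(t) \to 0$, the variation-of-constants formula
\begin{equation*}
\boldsymbol{e}_p(t) = e^{-k_p t}\boldsymbol{e}_p(0) + \int_0^t e^{-k_p (t-\tau)}\boldsymbol{e}_d(\tau)\,d\tau
\end{equation*}
tends to zero. This gives $\boldsymbol{e}_p \to 0$, consistent with Lemma~\ref{lem-1}.
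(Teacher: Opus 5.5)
Your proof is correct and has the same skeleton as the paper's. Both obtain $\dot{\boldsymbol{e}}_p = -k_p \boldsymbol{e}_p + \boldsymbol{e}_{po} + \boldsymbol{e}_{ve}$, remove $\boldsymbol{e}_{ve}$ using $\boldsymbol{e}_v = 0$, and invoke Theorem~\ref{theorem-1} for the vanishing observer error. You differ in two places.

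First, the paper simply asserts that $\boldsymbol{e}_{ve} = 0$ is equivalent to $\boldsymbol{e}_v = 0$. You instead verify $f_p(\boldsymbol{v}_d) = \boldsymbol{v}_{ed}$ through the $\arcsin$/${\rm atan2}$ inversion. This is the only direction the lemma needs. The converse the paper claims can in fact fail at $V_d = 0$ or for a purely vertical $\boldsymbol{v}_{ed}$, where the angles are not determined.

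Second, the paper closes with the nominal Lyapunov function $V_1 = \boldsymbol{e}_p^T\boldsymbol{e}_p/2$ and the cascade Lemma~\ref{lem-1}. You use the variation-of-constants formula for the linear error equation instead. The paper's route is shorter and matches the template it reuses in Lemmas~\ref{lem-3} and \ref{lem-4}. However, as you note, Lemma~\ref{lem-1} is stated for an autonomous driving system $\dot{\epsilon} = h(\epsilon)$, while the observer-error system \eqref{equ-45} is forced by $h_s(t)$. Your convolution (ISS) argument needs only that $\boldsymbol{e}_{po}$ is bounded and tends to zero, so it is what rigorously yields $\boldsymbol{e}_p \to 0$ here.
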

\begin{IEEEproof}

    From \eqref{equ-46} and \eqref{equ-47}, we have
    \begin{equation}
        \label{equ-48}
        \dot{\boldsymbol{e}}_p = -k_p \boldsymbol{e}_p + \boldsymbol{e}_{po} + \boldsymbol{e}_{ve},
    \end{equation}
    where $\boldsymbol{e}_{po} := \boldsymbol{d}_p - \hat{\boldsymbol{d}_p}$ is the asymptotic observer error. From the definitions of the desired airspeed and flight direction in \eqref{equ-47}, we can further derive that $\boldsymbol{e}_{ve} = 0$ is equivalent to $\boldsymbol{e}_v = 0$.
    Consider the Lyapunov function candidate $V_1 = \boldsymbol{e}_p^T \boldsymbol{e}_p / 2$. If $\boldsymbol{e}_{v} = 0$ and $\boldsymbol{e}_{po} = 0$, by taking the derivative of $V_1$, we have
    \begin{equation}
        \label{equ-49}
        \dot{V}_1 = -k_p \boldsymbol{e}_p^T \boldsymbol{e}_p < 0, \ \text{if} \ \boldsymbol{e}_p \neq 0.
    \end{equation}
    From Theorem~\ref{theorem-1}, the observer error $\boldsymbol{e}_{po}$ is asymptotically stable at zero. Hence, we can further obtain that the position-tracking error $\boldsymbol{e}_p$ is asymptotically stable at zero if $\boldsymbol{e}_v = 0$ from Lemma~\ref{lem-1}.
\end{IEEEproof}

Lemma~\ref{lem-2} shows that the position-tracking error $\boldsymbol{e}_p$ asymptotically converges to zero under the disturbance compensation if the motion state tracking error $\boldsymbol{e}_{v} = 0$. Hence, what we need to do next is to control the motions such that the condition $\boldsymbol{e}_v = 0$ is guaranteed.

\subsection{Flight Direction Controller}
In this section, a flight direction controller is proposed, which aims to make the flight direction $\boldsymbol{\psi}$ converge to the desired value $\boldsymbol{\psi}_d = (\gamma_d, \chi_d)^T$ and the wind-axis bank angle $\mu$ converge to the desired value $\mu_d$.

{\bf Step 1:}
The flight direction tracking error is defined as $\boldsymbol{e}_\psi := \boldsymbol{\psi} - \boldsymbol{\psi}_d$. An auxiliary controller is designed as follows.
\begin{equation}
    \label{equ-50}
    \boldsymbol{F}_{ad} = -g_\psi^{-1} \big( k_\psi \boldsymbol{e}_\psi + f_\psi + \hat{\boldsymbol{d}_\psi} - \dot{\boldsymbol{\psi}}_d \big),
\end{equation}
where $k_\psi > 0$ is the controller gain, $\boldsymbol{F}_{ad}$ is the desired aerodynamic driving force, $\hat{\boldsymbol{d}_\psi}$ is the estimation of $\boldsymbol{d}_\psi$. From \eqref{equ-36}, we can further obtain the desired aerodynamic angle
\begin{equation}
    \label{equ-51}
    \boldsymbol{\theta}_d = f_a^{-1}(\boldsymbol{F}_{ad}).
\end{equation}
The aerodynamic force and angle tracking errors are defined as $\boldsymbol{e}_F := \boldsymbol{F}_{a} - \boldsymbol{F}_{ad}$ and $\boldsymbol{e}_\theta := \boldsymbol{\theta} - \boldsymbol{\theta}_d$, respectively. From \eqref{equ-35} and \eqref{equ-50}, we have
\begin{equation}
    \label{equ-52}
    \dot{\boldsymbol{e}}_\psi = -k_\psi \boldsymbol{e}_\psi + g_\psi \boldsymbol{e}_F + \boldsymbol{e}_{\psi o} = -k_\psi \boldsymbol{e}_\psi + g_\psi g_F \boldsymbol{e}_\theta + \boldsymbol{e}_{\psi o},
\end{equation}
where $\boldsymbol{e}_{\psi o} := \boldsymbol{d}_\psi - \hat{\boldsymbol{d}_\psi}$ is the observer error.

{\bf Step 2:}
In this step, the aerodynamic Euler angle $\boldsymbol{\Theta}$ aims to
track the desired value $\boldsymbol{\Theta}_d = (\boldsymbol{\theta}^T_d, \mu_d)^T$, which includes the auxiliary control law $\boldsymbol{\theta}_d$ and the desired bank angle $\mu_d$. To this end, we design the following auxiliary controller.
\begin{equation}
    \label{equ-53}
    \boldsymbol{\omega}_d = -g_\theta^{-1} \big(k_\theta (\boldsymbol{\Theta} - \boldsymbol{\Theta}_d) + f_\theta + \boldsymbol{\hat{d}_\Theta} - \dot{\boldsymbol{\Theta}}_d \big),
\end{equation}
where $k_\theta > 0$ is the controller gain, $\boldsymbol{\hat{d}_\Theta}$ is the estimation of $\boldsymbol{d}_\Theta$. Define the aerodynamic Euler angle tracking error as $\boldsymbol{e}_\Theta := \boldsymbol{\Theta} - \boldsymbol{\Theta}_d = (\boldsymbol{e}_\theta^T, e_\mu^T)^T$, and the angular velocity tracking error as $\boldsymbol{e}_\omega = \boldsymbol{\omega} - \boldsymbol{\omega}_d$. From \eqref{equ-35} and \eqref{equ-53}, we have
\begin{equation}
    \label{equ-54}
    \dot{\boldsymbol{e}}_\Theta = - k_\theta \boldsymbol{e}_\Theta + g_\theta \boldsymbol{e}_\omega + \boldsymbol{e}_{\theta o},
\end{equation}
where $\boldsymbol{e}_{\theta o} = \boldsymbol{d}_\Theta - \boldsymbol{\hat{d}_\Theta}$ is the observer error.

{\bf Step 3:}
Similar to Step 2, the angular velocity $\boldsymbol{\omega}$ aims to track the auxiliary control law $\boldsymbol{\omega}_d$ in this step. To this end, we design the controller as follows.
\begin{align}
    \label{equ-55}
    \boldsymbol{M}_d &= g_\omega^{-1} \big(k_\omega (\boldsymbol{\omega} - \boldsymbol{\omega}_d) + f_\omega + \hat{\boldsymbol{d}_\omega} - \dot{\boldsymbol{\omega}}_d \big), \notag \\
    \boldsymbol{\delta} &= g_M^{-1} (\boldsymbol{M}_d - f_M),
\end{align}
where $k_\omega > 0$ is the controller gain, $\boldsymbol{M}_d$ is the desired aerodynamic moment, and $\hat{\boldsymbol{d}_\omega}$ is the estimation of $\boldsymbol{d}_\omega$. From \eqref{equ-35} and \eqref{equ-55}, we have
\begin{equation}
    \label{equ-56}
    \dot{\boldsymbol{e}}_\omega = -k_\omega \boldsymbol{e}_\omega + \boldsymbol{e}_{\omega o},
\end{equation}
where $\boldsymbol{e}_{\omega o} := \boldsymbol{d}_\omega - \hat{\boldsymbol{d}_\omega}$ is the observer error.

The following lemma shows the convergence of the flight direction tracking error $\boldsymbol{e}_\psi$, the aerodynamic Euler angle tracking error $\boldsymbol{e}_\Theta$, and the angular velocity tracking error $\boldsymbol{e}_\omega$.

\begin{lemma}
    \label{lem-3}
    \rm
    Consider the flight direction subsystem \eqref{equ-35}. Under Assumption~\ref{assum-1} and the control law \eqref{equ-55} with auxiliary variables \eqref{equ-50}, \eqref{equ-51}, and \eqref{equ-53}, the tracking errors of flight direction $\boldsymbol{e}_\psi$, aerodynamic Euler angle $\boldsymbol{e}_\Theta$, and angular velocity $\boldsymbol{e}_\omega$ are all asymptotically stable at zero.
\end{lemma}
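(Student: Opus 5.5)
The error dynamics \eqref{equ-52}, \eqref{equ-54}, and \eqref{equ-56} form a triangular cascade, and I would exploit that structure from the innermost loop outward. At the bottom sits the angular velocity error $\boldsymbol{e}_\omega$, driven only by the observer error $\boldsymbol{e}_{\omega o}$. Above it, $\boldsymbol{e}_\Theta$ is driven by $\boldsymbol{e}_\omega$ and $\boldsymbol{e}_{\theta o}$. At the top, $\boldsymbol{e}_\psi$ is driven by $\boldsymbol{e}_\theta$ (the first two components of $\boldsymbol{e}_\Theta$) and $\boldsymbol{e}_{\psi o}$. Each observer error is itself governed by an autonomous linear system of the form \eqref{equ-45} with Hurwitz $\boldsymbol{L}_o$, perturbed by the vanishing term $h_s(t)$. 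Theorem~\ref{theorem-1} therefore gives asymptotic stability of $\boldsymbol{e}_{\psi o}$, $\boldsymbol{e}_{\theta o}$, and $\boldsymbol{e}_{\omega o}$ at zero. The plan is to apply Lemma~\ref{lem-1} repeatedly, in the same way Lemma~\ref{lem-2} was proved.

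\textbf{Step (i).} Take $V_\omega = \boldsymbol{e}_\omega^T\boldsymbol{e}_\omega/2$. With $\boldsymbol{e}_{\omega o}=0$, \eqref{equ-56} gives $\dot V_\omega = -k_\omega \boldsymbol{e}_\omega^T\boldsymbol{e}_\omega<0$ for $\boldsymbol{e}_\omega\neq0$. The nominal $\boldsymbol{e}_\omega$ dynamics are thus asymptotically stable. Treating $\epsilon=\boldsymbol{e}_{\omega o}$ (together with its observer state) as the driving subsystem, Lemma~\ref{lem-1} gives $\boldsymbol{e}_\omega\to0$.

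\textbf{Step (ii).} Take $V_\Theta=\boldsymbol{e}_\Theta^T\boldsymbol{e}_\Theta/2$. With $\boldsymbol{e}_\omega=0$ and $\boldsymbol{e}_{\theta o}=0$, \eqref{equ-54} gives $\dot V_\Theta=-k_\theta\boldsymbol{e}_\Theta^T\boldsymbol{e}_\Theta$. Setting $\xi=\boldsymbol{e}_\Theta$ and letting $\epsilon$ collect $(\boldsymbol{e}_\omega, \boldsymbol{e}_{\theta o},\boldsymbol{e}_{\omega o})$ and the associated observer states, the $\epsilon$-system is asymptotically stable by Step (i) and Theorem~\ref{theorem-1}. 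Lemma~\ref{lem-1} then yields $\boldsymbol{e}_\Theta\to0$, and in particular $\boldsymbol{e}_\theta\to0$ and $e_\mu\to0$.

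\textbf{Step (iii).} Take $V_\psi=\boldsymbol{e}_\psi^T\boldsymbol{e}_\psi/2$. With $\boldsymbol{e}_\theta=0$ and $\boldsymbol{e}_{\psi o}=0$, \eqref{equ-52} gives $\dot V_\psi=-k_\psi\boldsymbol{e}_\psi^T\boldsymbol{e}_\psi$. Applying Lemma~\ref{lem-1} once more, with the full lower cascade as $\epsilon$, gives $\boldsymbol{e}_\psi\to0$. I would also note that the term $g_\psi g_F\boldsymbol{e}_\theta$ in \eqref{equ-52} is justified because $f_a$ is affine in $\boldsymbol{\theta}$ by \eqref{equ-36}, so $\boldsymbol{e}_F=g_F\boldsymbol{e}_\theta$ exactly.

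The main obstacle is the applicability of Lemma~\ref{lem-1}. The cascade here is not autonomous: the coupling gains $g_\psi g_F$ and $g_\theta$ depend on the state and on time through $V$, $\mu$, $\gamma$, $\alpha$, $\beta$, and the observer error equations contain the exogenous term $h_s(t)$. I would address this the way the paper does elsewhere. Each observer error is regarded as an asymptotically stable input by Theorem~\ref{theorem-1}. I would also assume that the coupling matrices remain bounded along trajectories, which requires $V$ bounded away from zero and $\cos\gamma,\cos\beta\neq0$ so that $g_\psi^{-1}$, $g_\theta^{-1}$, and $g_M^{-1}$ exist. Under these conditions the interconnection terms vanish as the lower states vanish, and the local asymptotic stability conclusion of Lemma~\ref{lem-1} transfers. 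If a more careful argument is needed, I would replace each cascade step by an input-to-state stability bound. For example, \eqref{equ-56} gives $|\boldsymbol{e}_\omega(t)|\le e^{-k_\omega t}|\boldsymbol{e}_\omega(0)|+\frac{1}{k_\omega}\sup_{s\ge t/2}|\boldsymbol{e}_{\omega o}(s)|+\dots$, so a vanishing input produces a vanishing state. The same estimate propagates upward through Steps (ii) and (iii).
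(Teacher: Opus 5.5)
Your proposal is essentially the paper's proof. Both arguments work from the innermost loop outward, first $\boldsymbol{e}_\omega$, then $(\boldsymbol{e}_\Theta,\boldsymbol{e}_\omega)$, then $(\boldsymbol{e}_\psi,\boldsymbol{e}_\Theta,\boldsymbol{e}_\omega)$, using quadratic Lyapunov functions for the nominal loops, Theorem~\ref{theorem-1} for the observer errors, and repeated application of Lemma~\ref{lem-1}. Your extra points go beyond what the paper checks: the cascades are non-autonomous with state-dependent couplings, $g_\theta$ and $g_\psi g_F$ must stay bounded with $V$ bounded away from zero, and an ISS-type estimate can serve as a fallback.
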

\begin{IEEEproof}

    First, consider the Lyapunov function candidate $V_\omega = \boldsymbol{e}_\omega^T \boldsymbol{e}_\omega / 2$. If $\boldsymbol{e}_\omega \neq 0$ and $\boldsymbol{e}_{\omega o} = 0$, the derivative of $V_\omega$ gives that
    \begin{equation}
        \label{equ-58}
        \dot{V}_\omega = - k_\omega \boldsymbol{e}_\omega^T \boldsymbol{e}_\omega < 0.
    \end{equation}
    From Theorem~\ref{theorem-1}, the observer error $\boldsymbol{e}_{\omega o}$ is asymptotically stable at zero. Hence, according to Lemma~\ref{lem-1}, we can further obtain that $\boldsymbol{e}_\omega$ is asymptotically stable at $\boldsymbol{e}_\omega = 0$.

    Next, consider the Lyapunov function candidate $V_\Theta = \boldsymbol{e}_\Theta^T \boldsymbol{e}_\Theta / 2$. If $\boldsymbol{e}_\Theta \neq 0$, $\boldsymbol{e}_\omega = 0$, and $\boldsymbol{e}_{\theta o} = 0$, we have the derivative of $V_\Theta$ as follows.
    \begin{equation}
        \label{equ-59}
        \dot{V}_\Theta = -k_\theta \boldsymbol{e}_\Theta^T \boldsymbol{e}_\Theta < 0.
    \end{equation}
    From Theorem~\ref{theorem-1} and Lemma~\ref{lem-1}, we can obtain that $(\boldsymbol{e}_\Theta^T, \boldsymbol{e}_\omega^T)^T$ is asymptotically stable at the origin.
    
    Since $\boldsymbol{e}_\theta$ is a part of $\boldsymbol{e}_\Theta$, $\boldsymbol{e}_\theta$ is also asymptotically stable at zero. Similarly, we further obtain that $(\boldsymbol{e}_\psi^T, \boldsymbol{e}_\Theta^T, \boldsymbol{e}_\omega^T)^T$ is asymptotically stable at the origin. That is, the tracking errors $\boldsymbol{e}_\psi$, $\boldsymbol{e}_\Theta$, and $\boldsymbol{e}_\omega$ are all asymptotically stable at zero.
\end{IEEEproof}

Lemma~\ref{lem-3} implies that the bank angle tracking error $e_\mu$ asymptotically converges to zero. Hence, both the flight direction $\boldsymbol{\psi}$ and the bank angle $\mu$ converge to their desired values $\boldsymbol{\psi}_d$ and $\mu_d$, respectively.

\subsection{Airspeed Controller}
An airspeed controller is proposed in this subsection to ensure that the airspeed $V$ converges to the desired value $V_d$, also known as the turbojet engine controller. In particular, we design an observer-based controller to compensate for the linearization error of the turbojet engine as follows.

{\bf Step 1:} In this step, a desired rotor speed generator is designed to control the airspeed $V$ track the desired value $V_d$. To this end, we first design the following auxiliary controller.
\begin{equation}
    \label{equ-60}
    \Delta T_d = -g_v^{-1} \big(k_v e_V + f_v + \hat{d}_{V} - \dot{V}_d \big),
\end{equation}
where $k_v > 0$ is the controller gain, $e_V := V - V_d$ is the airspeed tracking error, $\hat{d}_{V}$ is the estimation of $d_{V}$, and $\Delta T_d$ is the desired value of the relative thrust $\Delta T$. Define the thrust error as $e_{\Delta T}:= \Delta T - \Delta T_d$. We can further obtain that
\begin{equation}
    \label{equ-61}
    \dot{e}_V = -k_v e_V + g_v e_{\Delta T} + e_{ov},
\end{equation}
where $e_{ov} := d_V - \hat{d}_V$ is the observer error.
Next, the desired rotor speed $\Delta n_d$ is generated by the following system, in which we implement thrust compensation.
\begin{equation}
    \label{equ-62}
    \Delta \dot{n}_d = \frac{b_n}{d_n} \bigg(\frac{a_n d_n}{b_n} - c_n \bigg) \Delta n_d + \frac{b_n}{d_n} \Delta T_d + \hat{d}_{\Delta n} - \frac{b_n}{d_n} \hat{d}_{\Delta T},
\end{equation}
where $\hat{d}_{\Delta T}$ and $\hat{d}_{\Delta n}$ are estimations of $d_{\Delta T}$ and $d_{\Delta n}$, respectively.

{\bf Step 2:} In this step, an auxiliary controller is designed for the rotor speed $\Delta n$ as follows.
\begin{equation}
    \label{equ-63}
    \Delta W_f = -\frac{1}{b_n} \big( k_{\Delta n} (\Delta n - \Delta n_d) + a_n \Delta n - \Delta\dot{n}_d + \hat{d}_{\Delta n} \big),
\end{equation}
where $k_{\Delta n} > 0$ is the controller gain. From \eqref{equ-33} and \eqref{equ-63}, we have
\begin{equation}
    \label{equ-64}
    \dot{e}_{\Delta n} = - k_{\Delta n} e_{\Delta n} + e_{d\Delta n},
\end{equation}
where $e_{\Delta n}:= \Delta n - \Delta n_d$ is the rotation speed error, and $e_{d\Delta n} := d_{\Delta n} - \hat{d}_{\Delta n}$ is the observer error.

The following lemma shows the convergence of the airspeed tracking error $e_V$.

\begin{lemma}
    \label{lem-4}
    \rm
    Consider the airspeed subsystem \eqref{equ-33}. Under Assumption~\ref{assum-1} and the airspeed tracking control law \eqref{equ-63} with auxiliary variables \eqref{equ-60} and \eqref{equ-62}, the tracking error of the airspeed $e_V$ is asymptotically stable at zero.
\end{lemma}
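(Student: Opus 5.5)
We follow the cascade pattern used in Lemma~\ref{lem-2} and Lemma~\ref{lem-3}. The idea is to express every error in the airspeed loop through a chain of subsystems driven only by observer errors, which vanish by Theorem~\ref{theorem-1}. The closed-loop errors are $e_V$, $e_{\Delta n}$ and $e_{\Delta T}$. The central computation expresses $e_{\Delta T}$ in terms of $e_{\Delta n}$ and the observer errors.

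\textbf{Step 1: relating $e_{\Delta T}$ to $e_{\Delta n}$.} From \eqref{equ-33} we have $\Delta T = c_n \Delta n + d_n \Delta W_f + d_{\Delta T}$. We substitute $\Delta W_f$ from \eqref{equ-63}, using $d_n\neq 0$ and $b_n\neq 0$ from Assumption~\ref{assum-1}. This gives
\begin{equation*}
\Delta T = \Big(c_n - \tfrac{d_n a_n}{b_n}\Big)\Delta n - \tfrac{d_n}{b_n}\big(k_{\Delta n} e_{\Delta n} - \Delta\dot n_d + \hat d_{\Delta n}\big) + d_{\Delta T}.
\end{equation*}
Next we insert $\Delta\dot n_d$ from \eqref{equ-62}. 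The term $\tfrac{d_n}{b_n}\cdot\tfrac{b_n}{d_n}\Delta T_d$ produces $\Delta T_d$. The term $\tfrac{d_n}{b_n}\cdot\tfrac{b_n}{d_n}\big(\tfrac{a_n d_n}{b_n}-c_n\big)\Delta n_d$ combines with the $\Delta n$ coefficient to give $-\big(\tfrac{a_n d_n}{b_n}-c_n\big)e_{\Delta n}$. The $\hat d_{\Delta n}$ terms cancel. The remaining pieces leave $d_{\Delta T}-\hat d_{\Delta T}$. We therefore expect
\begin{equation*}
e_{\Delta T} = \Big(c_n - \tfrac{a_n d_n}{b_n} - \tfrac{d_n k_{\Delta n}}{b_n}\Big) e_{\Delta n} + e_{d\Delta T},
\end{equation*}
where $e_{d\Delta T}:=d_{\Delta T}-\hat d_{\Delta T}$. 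Verifying this cancellation is the main obstacle, since it is exactly what the thrust compensation in \eqref{equ-62} was designed to achieve. The delicate point is that the thrust $\Delta T$ is unmeasured. We therefore interpret $\hat d_{\Delta T}$ as the output of the observer \eqref{equ-41} applied to the augmented $(\Delta n,\Delta T)$ description. Its error then converges by Theorem~\ref{theorem-1}, provided $\boldsymbol{L}_o$ is Hurwitz.

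\textbf{Step 2: the inner loop.} By \eqref{equ-64}, $e_{\Delta n}$ obeys $\dot e_{\Delta n}=-k_{\Delta n}e_{\Delta n}+e_{d\Delta n}$. We take $V_{n}=e_{\Delta n}^2/2$. When $e_{d\Delta n}=0$ this gives $\dot V_n=-k_{\Delta n}e_{\Delta n}^2<0$. Since $e_{d\Delta n}\to0$ by Theorem~\ref{theorem-1}, Lemma~\ref{lem-1} yields asymptotic stability of $e_{\Delta n}$. By Step 1, $e_{\Delta T}$ is then a linear combination of signals that converge to zero, so it converges as well.

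\textbf{Step 3: the outer loop.} Equation \eqref{equ-61} reads $\dot e_V=-k_ve_V+g_ve_{\Delta T}+e_{ov}$. We take $V_V=e_V^2/2$. When $e_{\Delta T}=0$ and $e_{ov}=0$, this gives $\dot V_V=-k_ve_V^2<0$, so the nominal subsystem is asymptotically stable. We view $(e_{\Delta n}, e_{d\Delta n}, e_{d\Delta T}, e_{ov})$ as the driving state $\epsilon$, whose own dynamics is asymptotically stable by Step 2 and Theorem~\ref{theorem-1}. Lemma~\ref{lem-1} then gives asymptotic stability of $e_V$ at zero. Here we note that $g_v=T_0\cos\alpha\cos\beta/m$ is bounded and nonzero in the flight envelope, which justifies both the inversion in \eqref{equ-60} and the boundedness of the interconnection term.
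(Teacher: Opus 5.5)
Your proposal is correct and follows essentially the same route as the paper: a Lyapunov/cascade argument for $e_{\Delta n}$ via \eqref{equ-64}, Theorem~\ref{theorem-1} and Lemma~\ref{lem-1}; the identity $e_{\Delta T}=\big(c_n-\tfrac{a_nd_n}{b_n}-\tfrac{k_{\Delta n}d_n}{b_n}\big)e_{\Delta n}+e_{d\Delta T}$, which the paper gets by solving \eqref{equ-62} for $\Delta T_d$ and combining with \eqref{equ-33}; and the same cascade conclusion for $e_V$ from \eqref{equ-61}. The differences are only the order of the steps and your extra remarks on how $\hat d_{\Delta T}$ is obtained and on $g_v$ being bounded and nonzero, both of which the paper leaves implicit.
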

\begin{IEEEproof}

    First, consider the Lyapunov function candidate $V_{\Delta n} = e_{\Delta n}^2 / 2$. If $e_{d\Delta n} = 0$, by taking the derivative of $V_{\Delta n}$, we have
    \begin{equation}
        \label{equ-65}
        \dot{V}_{\Delta n} = -k_{\Delta n} e_{\Delta n}^2 < 0.
    \end{equation}
    From Theorem~\ref{theorem-1} and Lemma~\ref{lem-1}, the rotor speed error $e_{\Delta n}$ is asymptotically stable at zero.
    Next, from \eqref{equ-62}, we have
    \begin{equation}
        \label{equ-66}
        \Delta T_d = \bigg(c_n - \frac{a_n d_n}{b_n} \bigg) \Delta n_d + \frac{d_n}{b_n} \Delta \dot{n}_d - \frac{d_n}{b_n} \hat{d}_{\Delta n} + \hat{d}_{\Delta T}.
    \end{equation}
    Combining \eqref{equ-33} and \eqref{equ-66}, we can further derive that
    \begin{align}
        \label{equ-67}
        e_{\Delta T} &= \Delta T - \Delta T_d \notag \\
        &= \bigg(c_n - \frac{a_n d_n}{b_n} - \frac{k_{\Delta n} d_n}{b_n} \bigg) e_{\Delta n} + e_{d\Delta T},
    \end{align}
    where $e_{d\Delta T}:= d_{\Delta T} - \hat{d}_{\Delta T}$ is the observer error.
    Since $e_{\Delta n}$ is asymptotically stable at zero and $e_{d\Delta T}$ converges to zero by Theorem 1, $e_{\Delta T}$ is also asymptotically stable at zero. Similarly, we further obtain that $e_V$ in \eqref{equ-61} is asymptotically stable at zero by Theorem~\ref{theorem-1} and Lemma~\ref{lem-1}.
\end{IEEEproof}

\subsection{Closed-loop Stability Analysis}
From the above subsections, the modules of the proposed integrated flight and propulsion control framework for fixed-wing UAVs are fully designed. The following theorem discusses the closed-loop stability.

\begin{theorem}
    \label{theorem-2}
    \rm
    Consider the fixed-wing UAV and turbojet engine modeled as \eqref{equ-31}, \eqref{equ-33}, and \eqref{equ-35}. Suppose Assumption~\ref{assum-1} hold. If the motion state planner is designed as \eqref{equ-47}, the flight direction control module is designed as \eqref{equ-55}, the airspeed control module is designed as \eqref{equ-63}, and the disturbance observer is designed as \eqref{equ-41}, then the control objective in \eqref{equ-39} is achieved, i.e., $\lim_{t \to \infty} \boldsymbol{e}_p = 0$ and $\lim_{t \to \infty} e_\mu = 0$.
\end{theorem}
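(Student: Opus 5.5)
The proof assembles the closed loop as a cascade and applies Lemma~\ref{lem-1} repeatedly, using the results already established for each module. We collect all error states into three groups. The first group is the observer errors $e_o$ for every channel: $\boldsymbol{e}_{po}$, $\boldsymbol{e}_{\psi o}$, $\boldsymbol{e}_{\theta o}$, $\boldsymbol{e}_{\omega o}$, $e_{ov}$, $e_{d\Delta n}$ and $e_{d\Delta T}$. The second group is the inner-loop tracking errors $(\boldsymbol{e}_\psi,\boldsymbol{e}_\Theta,\boldsymbol{e}_\omega)$ and $(e_V,e_{\Delta n})$. The third group is the outer position error $\boldsymbol{e}_p$. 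The closed loop then has a triangular structure. The observer errors evolve autonomously by \eqref{equ-45}. The inner errors obey \eqref{equ-52}, \eqref{equ-54}, \eqref{equ-56}, \eqref{equ-61}, \eqref{equ-64} and \eqref{equ-67}; these are driven only by observer errors and by one another in a feed-forward way. Finally, $\boldsymbol{e}_p$ obeys \eqref{equ-48} and is driven by $\boldsymbol{e}_{po}$ and $\boldsymbol{e}_{ve}$.

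\textbf{Step 1.} By Theorem~\ref{theorem-1}, applied channel by channel with Hurwitz $\boldsymbol{L}_o$, every observer error converges to zero.

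\textbf{Step 2.} Lemma~\ref{lem-3} gives $\boldsymbol{e}_\psi\to0$ and $\boldsymbol{e}_\Theta\to0$. Since $e_\mu$ is the last component of $\boldsymbol{e}_\Theta$, this already yields $\lim_{t\to\infty}e_\mu=0$, which is the second half of \eqref{equ-39}.

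\textbf{Step 3.} Lemma~\ref{lem-4} gives $e_V\to0$. Together with $\boldsymbol{e}_\psi\to0$, this means $\boldsymbol{e}_v=(e_V,\boldsymbol{e}_\psi^T)^T\to0$.

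\textbf{Step 4.} We pass from $\boldsymbol{e}_v$ to $\boldsymbol{e}_{ve}$. By \eqref{equ-47} we have $\boldsymbol{v}_{ed}=f_p(\boldsymbol{v}_d)$, and by \eqref{equ-31} we have $\boldsymbol{v}_e=f_p(\boldsymbol{v})$. Hence
\begin{equation*}
\boldsymbol{e}_{ve}=f_p(\boldsymbol{v}_d+\boldsymbol{e}_v)-f_p(\boldsymbol{v}_d).
\end{equation*}
The map $f_p$ is smooth, and its Jacobian is bounded whenever $V$ is bounded. A mean-value estimate therefore gives $\|\boldsymbol{e}_{ve}\|\le c(\|\boldsymbol{v}_d\|)\|\boldsymbol{e}_v\|$, so $\boldsymbol{e}_{ve}\to0$ provided $V_d$ stays bounded.

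\textbf{Step 5.} We view \eqref{equ-48} as $\dot{\boldsymbol{e}}_p=-k_p\boldsymbol{e}_p+\epsilon(t)$ with $\epsilon:=\boldsymbol{e}_{po}+\boldsymbol{e}_{ve}\to0$. Lemma~\ref{lem-2} treats the unforced system $\dot{\boldsymbol{e}}_p=-k_p\boldsymbol{e}_p$, and Lemma~\ref{lem-1} then gives $\boldsymbol{e}_p\to0$. To be more robust than a bare cascade argument, we can also use the explicit solution
\begin{equation*}
\boldsymbol{e}_p(t)=e^{-k_pt}\boldsymbol{e}_p(0)+\int_0^te^{-k_p(t-s)}\epsilon(s)\,ds.
\end{equation*}
The integral of a vanishing input against an exponentially decaying kernel tends to zero, which again gives $\lim_{t\to\infty}\boldsymbol{e}_p=0$.

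The main obstacle is that the cascade is not truly autonomous. The desired signals feed back: $V_d$, $\boldsymbol{\psi}_d$ and their derivatives depend on $\boldsymbol{e}_p$ and $\hat{\boldsymbol{d}}_p$, and these enter $\boldsymbol{\Theta}_d$, $\boldsymbol{\omega}_d$ and $\Delta T_d$. Lemma~\ref{lem-1} in its local form also does not by itself rule out finite escape. I would handle this by first establishing boundedness. From the $\boldsymbol{e}_p$ equation with bounded input $\epsilon$ (the observer errors are bounded, and $\boldsymbol{e}_{ve}$ is bounded while $V$ is bounded), $\boldsymbol{e}_p$ is bounded. Hence $\boldsymbol{v}_{ed}$ is bounded, since $\dot{\boldsymbol{p}}_d$ and $\hat{\boldsymbol{d}}_p$ are bounded, and so $V_d$, $\gamma_d$ and $\chi_d$ are bounded. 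The inner-loop Lyapunov functions $V_\omega$, $V_\Theta$, $V_{\Delta n}$ then give input-to-state bounds, which makes the feed-forward gains $g_\psi g_F$, $g_\theta$ and $g_v$ bounded along trajectories. This closes the argument and justifies the convergence statements in Steps~2--5. I would also note that Step~4 needs $V_d$ bounded away from zero, and that $\cos\gamma$ and $\cos\beta$ must stay away from zero, so that $\gamma_d$, $\chi_d$, $g_\psi^{-1}$ and $g_\theta^{-1}$ are well defined. I would state these as standing regularity conditions implicit in the controller design.
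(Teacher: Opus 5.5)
Your proposal is correct and follows the paper's own proof: Lemmas~\ref{lem-3} and~\ref{lem-4} give $\boldsymbol{e}_\psi,\boldsymbol{e}_\Theta,e_V\to0$, hence $e_\mu\to0$ and $\boldsymbol{e}_v=(e_V,\boldsymbol{e}_\psi^T)^T\to0$ (you rightly use $\boldsymbol{e}_\psi$ where the paper writes $\boldsymbol{e}_\theta$), and Lemma~\ref{lem-2} with the cascade Lemma~\ref{lem-1} then gives $\boldsymbol{e}_p\to0$. Your mean-value and variation-of-constants steps spell out what the paper only asserts, but your boundedness sketch is circular as written, since bounding $\boldsymbol{e}_{ve}$ already presupposes bounded $V_d$, i.e.\ bounded $\boldsymbol{e}_p$; it closes more cleanly via $\|\boldsymbol{e}_{ve}\|\le|e_V|+V_d\|\boldsymbol{e}_\psi\|$ with $V_d\le k_p\|\boldsymbol{e}_p\|+\|\hat{\boldsymbol{d}_p}\|+\|\dot{\boldsymbol{p}}_d\|$, so that $-k_p\boldsymbol{e}_p$ dominates once $\|\boldsymbol{e}_\psi\|\le 1/2$---a point the paper never raises.
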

\begin{IEEEproof}

    From Lemmas~\ref{lem-3} and \ref{lem-4}, the flight direction tracking error $\boldsymbol{e}_\psi$, the aerodynamic Euler angle tracking error $\boldsymbol{e}_\Theta$, and the airspeed tracking error $e_V$ are asymptotically stable at zero. Hence, the bank angle tracking error $e_\mu$ is asymptotically stable at zero, and the velocity tracking error $\boldsymbol{e}_v = (e_V^T, e_\theta^T)^T$ is asymptotically stable at $\boldsymbol{e}_v = 0$.
    From Lemmas~\ref{lem-1} and \ref{lem-2}, we can further obtain that the position-tracking error $\boldsymbol{e}_p$ is asymptotically stable at $\boldsymbol{e}_p = 0$. That is, $\lim_{t \to \infty} \boldsymbol{e}_p = 0$ and $\lim_{t \to \infty} e_\mu = 0$.
\end{IEEEproof}

Theorem~\ref{theorem-2} shows that the position and the bank angle of the aircraft converge to their desired trajectories, respectively. In practice, the desired bank angle is designed to meet the conditions for a coordinated turn strategy, thereby enhancing flexibility and robustness \cite{Song2021}. The coordinated turn strategy adopted in this paper is designed as \cite{Stevens2015}
\begin{equation}
    \label{equ-68}
    \mu_d = \arctan\bigg( \frac{V \dot{\chi}_d}{g} \bigg).
\end{equation}
The coordinated turn module is also shown in Fig.~\ref{fig-1} as a part of the hierarchical tracking control framework.

\section{SIMULATION RESULTS}
\label{sec-4}

\begin{figure*} [!t]
    \centering
    \includegraphics[width=16.5cm]{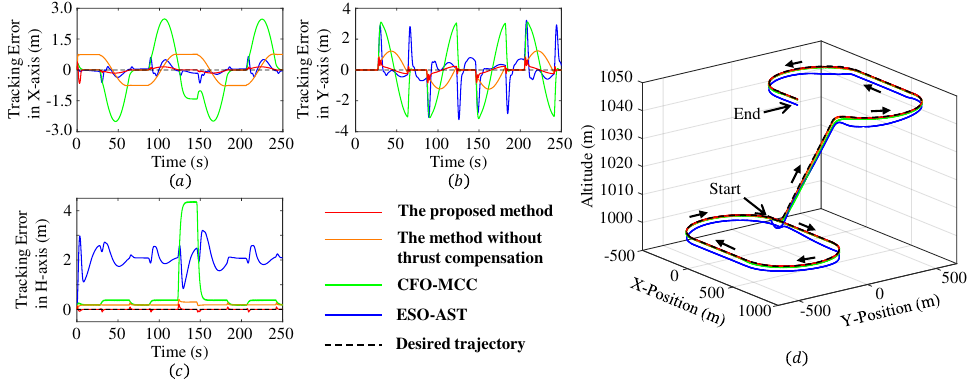}
    \caption{The illustration of the position-tracking errors and trajectories under different control strategies. 
    }
    \label{fig-3}
\end{figure*}

\begin{figure*} [!t]
    \centering
    \includegraphics[width=17.5cm]{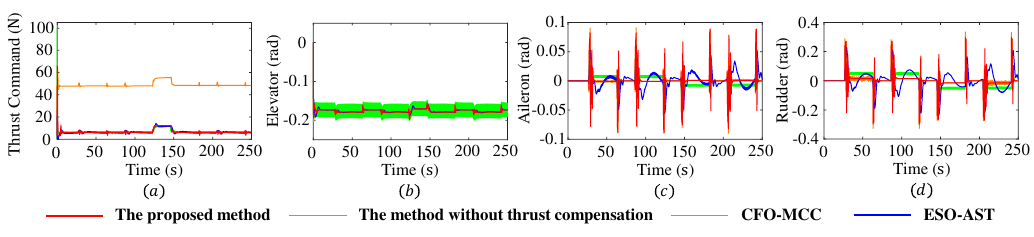}
    \caption{The illustration of the fixed-wing UAV control inputs. 
    }
    \label{fig-4}
\end{figure*}

\begin{figure} [!t]
    \centering
    \includegraphics[width=6.5cm]{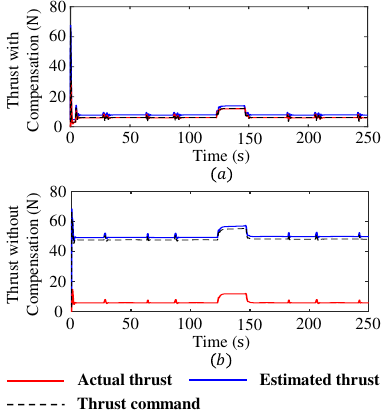}
    \caption{The illustration of the thrust in both with and without thrust compensation cases. 
    }
    \label{fig-5}
\end{figure}

In this section, the proposed control strategy is verified through a numerical simulation. This simulation illustrates the advantages of the proposed controller in addressing the time-varying position-tracking problem and unmodeled thrust dynamics. The proposed control strategy is compared with the adaptive super twisting controller with extended state observer (ESO-AST) in \cite{Castañeda2017}, and the model compensation controller with compensation function observer (CFO-MCC) in \cite{Xu2024}.

In this simulation, the desired trajectory is designed as the black dotted line in Fig.~\ref{fig-3}(d), which includes maneuvers such as climbing and turning. In particular, the aircraft first moves clockwise, then climbs to a certain altitude, and finally moves counterclockwise.

Under different controllers, the simulation results are shown in Figs.~\ref{fig-3}-\ref{fig-5}. Fig.~\ref{fig-3} shows the position-tracking errors for the desired trajectory. As shown in Fig.~\ref{fig-3}, the proposed method enables the aircraft to accurately track the desired trajectory, with the tracking errors converging to zero, even during maneuvers. The control inputs are presented in Fig.~\ref{fig-4}. Fig.~\ref{fig-5} illustrates the thrust generated by the turbojet engine during the simulation. From Fig.~\ref{fig-5}(a), we can see that with the proposed thrust compensation, the estimated thrust converges to the actual thrust, thereby ensuring the convergence of the actual thrust to the thrust command.

If the existing ESO-AST or CFO-MCC is applied, the tracking error cannot converge to zero, and the desired performance cannot be guaranteed, as the blue and green curves in Fig.~\ref{fig-3}. In particular, the ESO-AST brings steady-state error in the altitude channel because only attitude and airspeed are considered, rather than flight velocity. Furthermore, with the CFO-MCC method, periodic tracking errors appear in the x-axis and y-axis, and an increase in error is observed in the H-axis over the time interval $[125, 150]$. These issues arise because the method is designed exclusively for a constant desired value. In these two cases, the control inputs are not convergent either, especially those of the CFO-MCC method, which exhibit high-frequency oscillations and are therefore impractical for real-world applications, as shown in Fig.~\ref{fig-4}. From these two cases, we can see the advantages of the proposed control method over the existing ones in dealing with the time-varying position-tracking problem.

If the thrust compensation is not involved in the controller design, the tracking errors cannot converge to zero, particularly in the x-axis and y-axis, as the orange curve shown in Fig. \ref{fig-3}. In Fig.~\ref{fig-4}(a), the thrust command for the method without thrust compensation is distinctly different from the commands required by the other methods. This is because the estimated thrust cannot converge to the actual thrust as shown in Fig.~\ref{fig-5}(b). In particular, inaccurate thrust estimation leads to an increase in the thrust command, resulting in steady-state errors. From this case, we can see the necessity of thrust compensation and the advantages of the proposed method in dealing with the unmeasurable thrust dynamics.

\section{CONCLUSION}
\label{sec-5}

In this paper, a hierarchical tracking control framework with thrust and disturbance compensation is proposed for fixed-wing UAVs equipped with turbojet engines. To begin with, we propose a perturbed fixed-wing UAV model that consists of fixed-wing UAV kinetics and turbojet engine dynamics, and we consider both the unmodeled dynamics and external disturbances. Next, a hierarchical control framework is proposed to guarantee the desired tracking performance and to deal with the unmodeled dynamics and disturbances. More specifically, an extended observer is designed to estimate both thrust and disturbances within a unified framework, and then three observer-based controllers are designed. Finally, the effectiveness of the proposed control strategy is illustrated by a comparative numerical simulation. We will focus on extending the proposed method to the field of maneuvering multiple vehicles in the near future.

\bibliographystyle{IEEEtran}
\bibliography{reference}

\end{document}